\colorlet{MyBlue}{DodgerBlue!60!Black}
\colorlet{MyGreen}{DarkGreen!85!Black}
\def\EMAIL#1{\email{\href{mailto:#1}{\texttt #1}}}
\numberwithin{equation}{section}  %numberwithin goes before cleverefs when using hyperref
\crefname{example}{Ex.}{Exs.}
\newcommand{\from}{\colon}
\newcommand{\R}{\mathbb{R}}
\DeclareMathOperator*{\argmin}{arg\,min}
\DeclareMathOperator{\ex}{\mathbb{E}}
\DeclareMathOperator{\prob}{\mathbb{P}}
\DeclarePairedDelimiter{\braces}{\{}{\}}
\DeclarePairedDelimiter{\abs}{\lvert}{\rvert}
\DeclarePairedDelimiterX{\braket}[2]{\langle}{\rangle}{#1\mathopen{}\hspace{1pt}\delimsize\vert\hspace{1pt}\mathopen{}#2}
\DeclarePairedDelimiterX{\inner}[2]{\langle}{\rangle}{#1,#2}
\DeclarePairedDelimiterX{\setdef}[2]{\{}{\}}{#1:#2}
\DeclarePairedDelimiterXPP{\probof}[1]{\prob}{(}{)}{}{%

#1}
\DeclarePairedDelimiterXPP{\exof}[1]{\ex}{[}{]}{}{%

#1}
\newcommand{\txs}{\textstyle}
\newcommand{\textpar}[1]{\textup(#1\textup)}
\newcommand{\BPR}{\text{\normalfont{BPR}}}
\theoremstyle{plain}
\newtheorem{theorem}{Theorem}
\newtheorem{corollary}[theorem]{Corollary}
\newtheorem*{corollary*}{Corollary}
\newtheorem{proposition}[theorem]{Proposition}
\theoremstyle{definition}
\newtheorem{definition}[theorem]{Definition}
\newtheorem*{definition*}{Definition}
\newtheorem*{hypothesis*}{Hypothesis}
\theoremstyle{remark}
\newtheorem*{remark*}{Remark}
\newtheorem*{notation*}{Notational remark}
\newtheorem{example}{Example}
\numberwithin{theorem}{section}
\numberwithin{remark}{section}
\numberwithin{example}{section}
\newcommand{\game}{\Gamma}
\newcommand{\play}{i}
\newcommand{\graph}{\mathcal{G}}
\newcommand{\vertices}{\mathcal{V}}
\newcommand{\edges}{\mathcal{E}}
\newcommand{\edge}{e}
\newcommand{\vertex}{v}
\newcommand{\vertexalt}{\vertex'}
\newcommand{\source}{o}
\newcommand{\sink}{d}
\newcommand{\pair}{\play}
\newcommand{\pairs}{\mathcal{I}}
\newcommand{\rate}{\mu}
\newcommand{\flow}{f}
\newcommand{\flows}{\mathcal{F}}
\newcommand{\load}{x}
\newcommand{\nRoutes}{P}
\newcommand{\routes}{\mathcal{\nRoutes}}
\newcommand{\route}{p}
\newcommand{\routealt}{\route'}
\newcommand{\toll}{\tau}
\newcommand{\interc}{t}
\newcommand{\slope}{a}
\newcommand{\const}{\lambda}
\newcommand{\increm}{\gamma}
\newcommand{\diffOD}{\Delta}
\newcommand{\degree}{\beta}
\newcommand{\cost}{c}
\newcommand{\eq}[1]{#1^{\ast}}
\newcommand{\opt}[1]{\tilde{#1}}
\DeclareMathOperator{\Eq}{Eq}
\DeclareMathOperator{\Opt}{Opt}
\DeclareMathOperator{\PoA}{PoA}
\newcommand{\obj}{L}
\title{Demand-Independent Optimal Tolls}
\author
[R.~Colini-Baldeschi]
{Riccardo Colini-Baldeschi}
\address{Core Data Science Group, Facebook Inc.,
1 Rathbone Place, London, W1T~1FB, UK,}
\author
[M.~Klimm]
{Max Klimm}
\address{JP f\"ur Operations Research,
Wirtschaftswissenschaftliche Fakult\"at,
Humboldt-Universit\"at zu Berlin,
Spandauer Straße 1,
10099 Berlin, Germany}
\author
[M.~Scarsini]
{Marco Scarsini}
\address{Dipartimento di Economia e Finanza, LUISS, Viale Romania 32, 00197 Roma, Italy,}
\thanks{Riccardo Colini-Baldeschi and Marco Scarsini are members of GNAMPA-INdAM.
Max Klimm gratefully acknowledges the support and hospitality of LUISS during a visit in which this research was initiated.}
\subjclass{Primary 91A13; secondary 91A43.}
\keywords{%
nonatomic congestion games;
efficiency of equlibria;
tolls.}
\newcommand{\acdef}[1]{\textit{\acl{#1}} \textpar{\acs{#1}}\acused{#1}}
\newacro{NE}{Nash equilibrium}
\newacro{WE}{Wardrop equilibrium}
\newacro{SO}{social optimum}
\newacro{DAMG}{directed acyclic multi-graph}
\newacro{KKT}{Karush\textendash Kuhn\textendash Tucker}
\newacro{OD}[O/D]{origin-destination}
\newacro{PoA}{price of anarchy}
\newacro{BPR}{bureau of public roads}
\newacro{DIOT}{demand-independent optimum toll}
\begin{document}

%----------------------------------------------------------------------
%%% ABSTRACT
%----------------------------------------------------------------------

\maketitle

\begin{abstract}

Wardrop equilibria in nonatomic congestion games are in general inefficient as they do not induce an optimal flow that minimizes the total travel time. Network tolls are a prominent and popular way to induce an optimum flow in equilibrium. The classical approach to find such tolls is marginal cost pricing which requires the exact knowledge of the demand on the network. In this paper, we investigate under which conditions \aclp{DIOT} exist that induce the system optimum flow for any travel demand in the network. We give several characterizations for the existence of such tolls both in terms of the cost structure and the network structure of the game. Specifically we show that \aclp{DIOT} exist if and only if the edge cost functions are shifted monomials as used by the Bureau of Public Roads. Moreover, non-negative \aclp{DIOT} exist when the network is a directed acyclic multi-graph. Finally, we show that any network with a single \acl{OD} pair admits \aclp{DIOT} that, although not necessarily non-negative, satisfy a budget constraint.
\end{abstract}

%----------------------------------------------------------------------
%%% INTRODUCTION
%----------------------------------------------------------------------
\section{Introduction}
\label{sec:introduction}

%----------------------------------------------------------------------
%%% PROBLEM
%----------------------------------------------------------------------

The impact of selfish behavior on the efficiency of traffic networks is a longstanding topic in the algorithmic game theory and operations research literature. Already more than half a century ago, \citet{War:PICE1952} stipulated a main principle of a traffic equilibrium that---in light of the omnipresence of modern route guidance systems---is as relevant as ever: ``The journey times on all the routes actually used are equal, and less than those which would be experienced by a single vehicle on any unused route.'' This principle can be formalized by modeling traffic as a flow in a directed network where edges correspond to road segments and vertices correspond to crossings. Each edge is endowed with a cost function that maps the total amount of traffic on it to a congestion cost that each flow particle traversing the edge has to pay. 
Further, we are given a set of commodities, each specified by an origin, a destination, and a flow demand. In this setting, a Wardrop equilibrium is a multi-commodity flow such that for each commodity the total cost of any used path is not larger than the total cost of any other path linking the commodity's origin and destination. Popular cost functions, put forward for the use in traffic models by the \cite{BPR} are of the form
\begin{align}
\label{eq:bpr}
\cost_\edge(\load_\edge) = \interc_\edge \biggl(1 +\alpha \Bigl( \frac{\load_\edge}{k_\edge} \Bigr)^{\!\degree} \biggr),
\end{align}
where $x_e$ is the traffic flow along edge $e$, $t_e \geq 0$ is the free-flow travel time, $k_e$ is the capacity of edge $e$ and $\alpha$ and $\beta$ are parameters fitted to the model.

While Wardrop equilibria are guaranteed to exist under mild assumptions on the cost functions \citep{BecMcGWin:Yale1956}, it is well known that they are inefficient in the sense that they do not minimize the overall travel time of all commodities \citep{Pig:Macmillan1920}. A popular mechanism to decrease the inefficiency of selfish routing are congestion tolls. A toll is a payment that the system designer defines for each edge of the graph and that has to be paid by each flow particle traversing the edge. By carefully choosing the edges' tolls, the system designer can steer the Wardrop equilibrium in a favorable direction. A classic approach first due to \citet{Pig:Macmillan1920} is \emph{marginal cost pricing} where the toll of each edge is equal to difference between the marginal social cost and the marginal private cost of the system optimum flow on that edge. Marginal cost pricing induces the system optimal flow---the one that minimizes the overall travel time---as a Wardrop equilibrium \citep{BecMcGWin:Yale1956}. Congestion pricing is not only an interesting theoretical issue that links system optimal flows and traffic equilibria, but also a highly relevant tool in practice, as various cities of the world, including Stockholm, Singapore, Bergen, and London, implement congestion pricing schemes to mitigate congestion \citep{gomez1994road,small1997road}.

\subparagraph{The problem}
\label{sec:problem}

Marginal cost pricing is an elegant way to induce the system optimum flow as a Wardrop equilibrium, but the concept crucially relies on the exact knowledge of the travel demand. As an example consider the Pigou network in \cref{fig:pigou} for an arbitrary flow demand of $\rate > 0$ going from $\source$ to $\sink$. The optimal flow only uses the lower edge with cost function $c(x) = x$ as long as $\rate \leq 1/2$. For demands $\rate > 1/2$, a flow of $1/2$ is sent along the lower edge and the remaining flow of $\rate - 1/2$ is sent along the upper edge. Using marginal cost pricing the toll of the lower edge is $\min\{\rate,1/2\}$ and no toll is to be payed for the upper edge, see \cref{fig:pigou_pricing}.

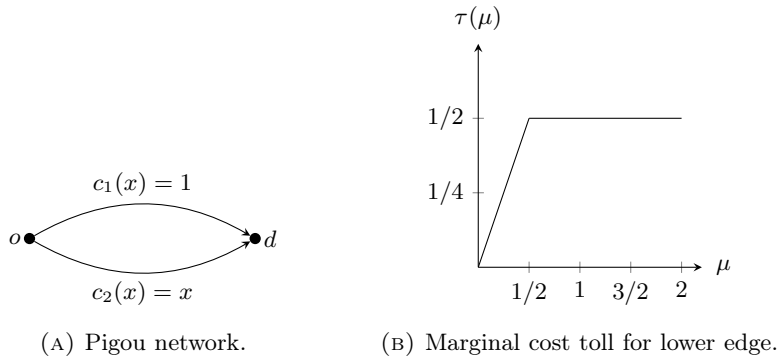
\begin{figure}
\small
\begin{subfigure}[b]{0.45\textwidth}
\centering
\begin{tikzpicture}[edgestyle/.style={-},>=stealth]
\tikzstyle{every circle node}=[fill,inner sep=1.5pt,outer sep=0pt]
\draw (0,0) node(s)[circle]{} node[left]{$\source$};
\draw (3,0) node(t)[circle]{} node[right]{$\sink$};
\draw[->] (s) to[bend left]  node[pos=0.5,above] {$c_1(x)=1$} (t);
\draw[->] (s) to[bend right] node[pos=0.5,below] {$c_2(x)=x$} (t);
\end{tikzpicture}
\subcaption{Pigou network.\label{fig:pigou}}
\end{subfigure}
\begin{subfigure}[b]{0.45\textwidth}
\centering
\begin{tikzpicture}[edgestyle/.style={-},>=stealth]
\begin{axis}[
width=0.8\textwidth,
height=0.8\textwidth,
axis lines=center,
xlabel={$\rate$},
every axis x label/.style={
    at={(ticklabel* cs:1.025)},
    anchor=west,
},
every axis y label/.style={
    at={(ticklabel* cs:1.025)},
    anchor=south,
},
ylabel={$\tau(\rate)$},
xmin=0, xmax=2.2,
ymin=0, ymax=0.75,
xtick={0,0.5,1,1.5,2},
xticklabels={$0$,$1/2$,$1$,$3/2$,$2$},
ytick={0,0.25,0.5},
yticklabels={$0$,$1/4$,$1/2$},
]
\addplot[domain=0:0.5]{x};
\addplot[domain=0.5:2]{0.5};
\end{axis}
\end{tikzpicture}
\subcaption{Marginal cost toll for lower edge.\label{fig:pigou_pricing}}
\end{subfigure}
\caption{Dependence of the marginal cost tolls for the Pigou network.}
\end{figure}

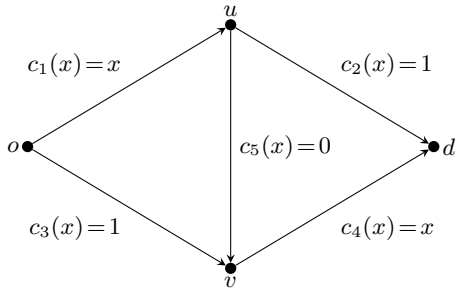
\begin{figure}
\small
\centering
\begin{tikzpicture}[scale=1.35,edgestyle/.style={-},>=stealth]
\tikzstyle{every circle node}=[fill,inner sep=1.5pt,outer sep=0pt]
\draw (-2,0) node(s)[circle]{} node[left]{$\source$};
\draw (0,1.2) node (v1)[circle]{} node[above]{$u$};
\draw (0,-1.2) node (v2)[circle]{} node[below]{$v$};
\draw (2,0) node(t)[circle]{} node[right]{$\sink$};
\draw[->] (s)  to node[pos=0.5,above left] {$c_1(x) \!=\! x$} (v1);
\draw[->] (v1) to node[pos=0.5,above right] {$c_2(x) \!=\! 1$} (t);
\draw[->] (s)  to node[pos=0.5,below left] {$c_3(x) \!=\! 1$} (v2);
\draw[->] (v2) to node[pos=0.5,below right] {$c_4(x) \!=\! x$} (t);
\draw[->] (v1) to node[pos=0.5,right] {$c_5(x) \!=\! 0$} (v2);
\end{tikzpicture}

\caption{Braess network.\label{fig:braess}}
\end{figure}

This toll scheme is problematic as it depends on the flow demand $\rate$ in the network. In particular, when the demand is estimated incorrectly, the resulting tolls will be sub-optimal. Assume the network designer expects a flow of $\rate=1/4$ and, thus, sets a (marginal cost) toll of $\tau_2 = 1/4$ on the lower link. When the actual total flow demand is higher than expected and equal to $\rate=1$, a fraction $1/4$ of the flow uses the upper edge and $3/4$ of the flow uses the lower edge resulting in a cost of $1/4+9/16$. However, it is optimal to induce an equal split between the edge which can be achieved by a toll of $1/2$ on the lower edge. Quite strikingly, the toll of $\tau_2 = 1/2$ is optimal for all possible demand values as it always induces the optimal flow: For demands less than $1/2$, a toll of $1/2$ on the lower edge does not hinder any flow particle from using the lower edge, which is optimal. On the other hand, for any demand larger $1/2$, the toll of $1/2$ on the lower edge forces the flow on this edge to not exceed $1/2$, which is optimal as well.

The situation is even more severe for the Braess network in \cref{fig:braess}. When the system designer expects a traffic demand of $1$ going from $\source$ to $\sink$, marginal cost pricing fixes a toll of $1$ on both the upper left and the lower right edge (both with cost function $c(x) = x$). When the demand is lower than expected, say, $\rate=1/2$, under marginal cost pricing, the flow is split equally between the lower and the upper path leading to a total cost of $5/4$. The optimal flow with flow value $1/2$, however, only uses the zig-zag-path $\source \to u \to v \to \sink$ with cost $1$. It is interesting to note that this flow is actually equal to the Wardrop equilibrium without any tolls. To conclude this example, marginal cost pricing may actually \emph{increase} the total cost of the Wardrop equilibrium when the travel demand is estimated incorrectly. We note that also in the Braess graph, there is a distinct toll vector that enforces the optimum flow as a Wardop equilibrium \emph{for any demand}. We will see that by setting a toll of $1$ on the central edge from $u$ to $v$, the Wardrop equilibrium for any flow demand is equal to the respective optimum flow.

We conclude that for both the Pigou network and the Braess network, marginal cost pricing is not robust with respect to changes in the demand since wrong estimates of the travel demand lead to sub-optimal tolls. Since such changes may occur frequently in road networks (e.g., due to sudden weather changes, accidents, or other unforeseen events), marginal cost pricing does not use the full potential of congestion pricing and may even be harmful for the traffic.
On the other hand, for both networks, there exist tolls that enforce the optimum flow as an equilibrium for \emph{any} flow demand.  In this paper, we systematically study conditions for such demand-independent optimal tolls to exist.
%----------------------------------------------------------------------
%%% RESULTS
%----------------------------------------------------------------------
\subparagraph{Our results}
\label{sec:results}

In this paper we study the existence of \acp{DIOT} that induce the optimum flow as an equilibrium for \emph{any} flow demand. We give a precise characterization in terms of the cost structure on the edges for \acp{DIOT} to exists. Specifically, we show that \acp{DIOT} exist for any network where the cost of each edge is a BPR-type function, i.e.,
\begin{align}
\label{eq:bpr2}
\cost_\edge(\load) = \interc_\edge + \slope_\edge\load_\edge^\degree \quad \text{ for all $\edge \in \edges$}, 
\end{align}
where $\interc_\edge, \slope_\edge \in \R_+$ are arbitrary while $\degree \in \R_+$ is a common constant for all edges $\edge \in \edges$. This existence result holds regardless of the topology of the network and on the number of \ac{OD} pairs. On the other hand, for any cost function that is not of the form as in \eqref{eq:bpr2}, there is a simple network consisting of two parallel edges with cost function $c$ and cost function $c+b$ for some $b > 0$ that does not admit a \ac{DIOT}. Our existence result for networks with BPR-type cost functions is proven in terms of a characterization that uniquely determines the sum of the tolls along each path that is used by the optimum flow for some demand.

In general the \acp{DIOT} used in the characterization may use negative tolls as well. We provide an example of a network with BPR-type cost functions where a non-negative \ac{DIOT} does not exist. Besides conditions on the costs, conditions on the network are needed to guarantee the existence of non-negative \acp{DIOT}.  We show that non-negative \acp{DIOT} exist for \acp{DAMG} with BPR-type cost functions, like the Pigou network and the Braess network discussed in \S~\ref{sec:introduction}. This condition on the network is sufficient, but not necessary for the existence of non-negative \acp{DIOT}. 

Under a weaker condition than \ac{DAMG}, we prove the existence of \acp{DIOT} that follow a budget constraint of non-negativity of the total amount of tolls. This condition is satisfied by networks with a single \ac{OD} pair.

Due to space constraint, some of the proofs are deferred to the appendix.

%Finally we show that the class of shifted monomials is actually the largest possible class of cost functions for which \acp{DIOT} exist in any possible network. 

%----------------------------------------------------------------------
%%% RELATED
%----------------------------------------------------------------------
\subparagraph{Related work}
%\label{sec:related}

Marginal cost pricing as a means to reduce the inefficiency of selfish resource allocation was first proposed by \citet{Pig:Macmillan1920} and subsequently discussed by \citet{Kni:QJE1924}. \citet{War:PICE1952} introduced the notion of a traffic equilibrium where each flow particle only uses shortest paths. \citet{BecMcGWin:Yale1956} showed that marginal cost pricing always induces the system optimal flow as a Wardrop equilibrium. The set of feasible tolls that induce optimal flows was explored by \citet{BerHeaRam:1997,HeaRam:Klu1998,LarPat:TR1999}. They showed that the set of optimal tolls can be described by a set of linear equations and inequalities.

This characterization led to various developments regarding the optimization of secondary objectives of the edge tolls, such as the minimization of the tolls collected from the users \citep{BaiHeaLaw:Networks2004,Dia:TR1999,Dia:TR2000}, or the minimzation of the number of edges that have positive tolls \citep{BaiRub:OR2009,BaiHeaLaw:JGO2010}. A problem closely related to the latter is to compute tolls for a given subset of edges with the objective to minimize the total travel time of the resulting equilibrium. \citet{HoeOlbSko:WINE2008} showed that this problem is NP-hard for general networks, and  gave an efficient algorithm for parallel edges graphs with affine cost functions. \citet{HarKleKliMoe:Network2015} generalized their result to arbitrary cost functions satisfying a technical condition. \citet{BonSalSch:SAGT2011} studied generalizations of this problem with further restrictions on the set of feasible edge tolls.

For heterogenous flow particles that trade off money and time differently, marginal cost pricing cannot be applied to find tolls that induce the system optimum flow. In this setting, \citet{ColDodRou:STOC2003} showed the existence of a set of tolls enforcing the system optimal flow, when there is a single commodity in the network. Similarly, \citet{YanHua:TR2004} studied how it is possible to design toll structure when there are users with different toll sensitivity. \citet{Fle:TCS2005} showed that in single source series-parallel networks the tolls have to be linear in the latency of the maximum latency path. \citet{KarKol:LNCS2006} and \citet{FleJaiMah:FOCS2004} independently generalized this result to arbitrary networks.
\citet{HanLoSunYan:EJOR2008} extended the previous results to different classes of cost functions.

Most of the literature assumes that the charged tolls cause no disutilty to the network users. For the case where tolls contribute to the cost, \citet{ColDodRou:JCSS2006} showed that marginal cost tolls do not improve the equilibrium flow for a large class of instances, including all instances with affine costs. They further showed that for these networks it is NP-hard to approximate the minimal total cost that can be achieved as a Wardrop equilibrium with tolls. 
\citet{KarKol:LNCS2005} proved that the total disutility due to taxation is bounded with respect to the social optimum for large classes of latency functions. 
Moreover, they showed that, if both the tolls and the latency are part of the social cost, then for some latency functions the coordination ratio improves when taxation is used.
For networks of parallel edges, \citet{ChrMehPyr:Algorithmica2014} studied a generalization of edge tolls where cost functions are allowed to increase in an arbitrary way. They showed that for affine cost functions, the \acl{PoA} is strictly better than in the original network, even when the demand is not known.

\citet{BroMar:P55IEEECDC2016,BroMar:IEEETAC2017} studied how marginal tolls can create perverse incentives when users have different sensitivity to the tolls and how it possible to circumvent this problem.
\citet{CarKakKan:Springer2006} studied the optimal toll problem for atomic congestion games. They proved that in the atomic case the optimal system performance cannot be achieved even in very simple networks. On the positive side they shown that there is a way to assign tolls to edges such that the induced social cost is within a factor of $2$ to the optimal social cost.
\citet{singh2008marginal} observed that marginal tolls weakly enforce optimal flows.
\citet{FotSpi:IM2008} showed that in series-parallel networks with increasing cost functions the optimal social cost can be induced with tolls.
\citet{MeiPar:9IWATT2016} discussed how in atomic congestion games with marginal tolls multiple equilibria are near-optimal when there is a large number of players.

In \citet{San:RES2002,San:JET2007} the problem is studied from a mechanism design perspective where the social planner has no information over the preferences of the users and has limited ability to observer the users' behavior.

%----------------------------------------------------------------------
%%% MODEL
%----------------------------------------------------------------------
\section{Model and preliminaries}
\label{sec:model}

In this section, we present some notation and basic definitions that are used in the sequel. We start with the underlying network model and will then introduce equilibria and tolls.
%Specifically, we introduce nonatomic congestion games, we give a precise definition of the \acl{PoA} and of network tolls.

%----------------------------------------------------------------------
%%% NETWORK
%----------------------------------------------------------------------
\subparagraph{Network model}
%\label{sec:network}

We consider a finite directed multi-graph $\graph=(\vertices,\edges)$ with vertex set $\vertices$ and edge set $\edges$. We call $(\vertex\to\vertexalt)$ the set of all edges $\edge$ whose tail is $\vertex$ and whose head is $\vertexalt$.
We  assume that there is a finite set 
of  \acdef{OD} pairs $\pair\in\pairs$, each with an individual \emph{traffic demand} $\rate^{\pair}\geq 0$  that has to be routed from 
an origin $\source^{\pair}\in\vertices$ to a destination $\sink^{\pair}\in\vertices $ via $\graph$. Denote the demand vector by $\boldsymbol{\rate}=(\rate^{\pair})_{\pair\in\pairs}$.
We call $\routes^{\pair}$  the set of (simple) paths joining $\source^{\pair}$ to $\sink^{\pair}$, where each path $\route\in\routes^{\pair}$ is a finite sequence of edges such that the head of each edge meets the tail of the subsequent edge.
For as long as all pairs $(\source^{\pair},\sink^{\pair})$ are different, the sets $\routes^{\pair}$ are  disjoint.
Call $\routes := \bigcup_{\pair\in\pairs} \routes^{\pair}$ the union of all such paths. 

Each path $\route$ is traversed by a \emph{flow} $\flow_{\route}\in\R_{+}$. Call $\boldsymbol{\flow} = (\flow_{\route})_{\route\in\routes}$ the vector of flows in the network.
The set of feasible flows for $\boldsymbol{\rate}$ is defined as
\begin{equation}
\label{eq:flows}
\txs
\flows(\boldsymbol{\rate})
	= \setdef*{\boldsymbol{\flow}\in\R_{+}^{\routes}}{\text{$\sum_{\route\in\routes^{\pair}} \flow_{\route} = \rate^{\pair}$ for all $\pair\in\pairs$}}.
\end{equation}
In turn, a routing flow $\boldsymbol{\flow}\in\flows(\boldsymbol{\rate})$ induces a \emph{load} on each edge $\edge\in\edges$ as
\begin{equation}
\label{eq:load}
\load_{\edge}
	= \sum_{\route\ni\edge} \flow_{\route}.
\end{equation}
We call $\boldsymbol{\load} = (\load_{\edge})_{\edge\in\edges}$ the corresponding \emph{load profile} on the network.
For each $\edge\in\edges$ consider a nondecreasing, continuous \emph{cost function} $\cost_{\edge}\from \R_+ \to \R_+$. Denote $\boldsymbol{\cost}=(\cost_{\edge})_{\edge\in\edges}$.
If $\boldsymbol{\load}$ is the load profile induced by a feasible routing flow $\boldsymbol{\flow}$, then the incurred delay on edge $\edge\in\edges$ is given by $\cost_{\edge}(\load_{\edge})$;
hence, with a slight abuse of notation, the associated cost of path $\route\in\routes$ is given by the expression $\cost_{\route}(\boldsymbol{\flow})
	\equiv \sum_{\edge\in\route} \cost_{\edge}(\load_{\edge})$.
We call the tuple $\game = (\graph,\pairs,\boldsymbol{\cost})$  a \emph{\textpar{nonatomic} routing game}.

%----------------------------------------------------------------------
%%% POA
%----------------------------------------------------------------------
%\subsection{Equilibrium, optimality, and the \acl{PoA}}
%\label{sec:PoA}
\subparagraph{Equilibrium Flows and Optimal Flows}

A routing flow $\eq{\boldsymbol{\flow}}$ is a \acdef{WE} of $\game$ if, for all $\pair\in\pairs$, we have:
\begin{align*}
%\label{eq:Wardrop}
\cost_{\route}(\eq{\boldsymbol{\flow}})
	\leq \cost_{\routealt}(\boldsymbol{\eq{\flow}})
	\quad
	\text{for all $\route,\routealt\in\routes^{\pair}$ such that $\eq{\flow}_{\route}>0$}.
\end{align*}
This concept was introduced by \cite{War:PICE1952}. \cite{BecMcGWin:Yale1956} showed that Wardrop equilibria are the optimal solutions to the convex optimization problem
\begin{align*}
\min\; \sum_{\edge \in \edges} &\int_{0}^{x_e}	\cost_e(s) \,\text{d}s \\
\text{s.t.: }\; \load_e &= \sum_{\route \ni \edge} \flow_\route\\
\boldsymbol{\flow} &\in \flows,
\end{align*}
and, thus, are guaranteed to exist. A \acdef{SO} is a flow that minimizes the overall travel time, i.e., it solves the following total cost minimization problem:
\begin{align}
%\tag{SO}
\begin{split}
\label{eq:SO}
\min\;	\obj(\boldsymbol{\flow}) &= \sum_{\route\in\routes} \flow_{\route} \cost_{\route}(\boldsymbol{\flow}),\\
\text{s.t.:}\quad \boldsymbol{\flow} &\in\flows.
\end{split}
\end{align}
A shown by \cite{BecMcGWin:Yale1956}, all Wardrop equilibria have the social cost. We write 
$\Eq(\game)
	= \obj(\eq{\boldsymbol{\flow}})$  and $\Opt(\game)
	= \txs\min_{\boldsymbol{\flow}\in\flows} \obj(\boldsymbol{\flow})$,
where $\eq{\boldsymbol{\flow}}$ is a \acl{WE} of $\game$.
The game's \acdef{PoA} is then defined as $\PoA(\game)
	= \Eq(\game) /\Opt(\game)$. It is known that Wardrop equilibria need not minimze the social cost, in that case $\PoA(\game) > 1$. For a pair $\pair \in \pairs$, we denote the set of paths that are eventually used in a optimum flow for some demand vector $\boldsymbol{\rate}$ by
	\begin{align*}
\opt{\routes}^\pair = \{\route \in \routes^\pair : \opt{\boldsymbol{\flow}}_\route (\boldsymbol{\rate}) > 0 \text{ for some demand $\boldsymbol{\rate}$ and corresponding social optimum $\opt{\boldsymbol{\flow}}(\boldsymbol{\rate})$}\}.
	\end{align*}
Here and in the following we write $\boldsymbol{\flow}(\boldsymbol{\rate})$ instead of $\boldsymbol{\flow}$ when we want to indicate the corresponding demand vector $\boldsymbol{\rate}$.

%----------------------------------------------------------------------
%%% TOLLS
%----------------------------------------------------------------------
\subparagraph{Tolls}
\label{sec:tolls}

We want to explore the possibility of imposing tolls on the edges of the network in such a way that the equilibrium flow of the game with tolls produces a flow that is a solution of the original miminization problem \eqref{eq:SO}. In other words, we want to see whether it is possible to achieve an optimum flow as an equilibrium of a modified game.

We call $\boldsymbol{\toll}=(\toll_{\edge})_{\edge\in\edges} \in \R^\edges$ a \emph{toll vector}. Note that we allow both for positive and negative tolls.
We call $\cost_{\edge}^{\toll}$ the cost of edge $\edge$ under the toll $\tau$, i.e., 
$\cost_{\edge}^{\toll}(\load_{\edge}) := \cost_{\edge}(\load_{\edge})+\toll_{\edge}$.
Similarly $\cost_{\route}^{\toll}(\boldsymbol{\flow}) := \sum_{\edge\in\route}\cost_{\edge}^{\toll}(\load_{\edge})$.
Define
$\game^{\boldsymbol{\toll}} := (\graph,\pairs,\boldsymbol{\cost}^{\boldsymbol{\toll}})$.
A toll vector $\boldsymbol{\toll}$ that for each demand vector $\boldsymbol{\rate} \in \R_+^{\pairs}$ enforces the corresponding system optimum as the equilibrium in $\game^{\boldsymbol{\toll}}$ is called demand-independent optimal toll.

\begin{definition}[Demand-independent optimal toll (DIOT)]
Let $\game = (\graph, \pairs, \boldsymbol{\cost})$. A toll vector $\boldsymbol{\toll} \in \R^\edges$ is called demand-independent optimal toll (DIOT) for $\game$ if for every demand vector $\boldsymbol{\rate} \in \R_+^\pairs$ every corresponding equilibrium with tolls $\boldsymbol{\flow}^{\boldsymbol{\toll}}(\boldsymbol{\rate}) \in \Eq(\game^{\boldsymbol{\toll}})$ is optimal for $\game$, i.e.,
$L(\boldsymbol{\flow}^{\boldsymbol{\toll}}(\boldsymbol{\rate}))
= \sum_{\route \in \route} \flow^{\boldsymbol{\toll}}_\route(\boldsymbol{\rate}) \cost_\route(\boldsymbol{\flow}^{\boldsymbol{\toll}}(\boldsymbol{\rate})) \leq L(\boldsymbol{\flow}(\boldsymbol{\rate})) = \sum_{\route \in \routes} \flow_\route \cost_\route(\boldsymbol{\flow}(\boldsymbol{\rate}))$ for all $\boldsymbol{\flow}(\boldsymbol{\rate}) \in \flows(\boldsymbol{\rate})$.
\end{definition}

In \S~\ref{sec:introduction} we visited two games, the Pigou networks and Braess' paradox, that admit a \ac{DIOT}. The aim of this paper is to characterize the networks $\game = (\graph, \pairs, \boldsymbol{\cost})$ for which \acp{DIOT} exist.

%----------------------------------------------------------------------
%%% MONOMIAL COSTS
%----------------------------------------------------------------------

\section{BPR-type cost functions}
\label{sec:monomial}

In this section, we give a complete characterization of the sets of cost functions that admit a \ac{DIOT}. On the positive side, we will show that any network with BPR-type cost functions admits a \ac{DIOT}, independently of the number of commodities and the network topology. On the other hand, we show a strong lower bound proving that for any non-BPR cost function there is a single-commodity game on two parallel edges with costs functions $\cost$ and $\cost + \interc$ for some $\interc \in \R_+$ that does not admit a \ac{DIOT}. Formally, for $\beta > 0$, let
\begin{align*}
\mathcal{C}_{\BPR}(\beta) = \{ \cost : \R_+ \to \R_+ : \cost(\load) = \interc_\cost + a_\cost \load^{\degree} \text{ for all $x \geq 0$, $\slope, \interc \in \R_+$} \}
\end{align*}
be the set of BPR-type cost functions with degree $\beta$.

The following theorem gives a sufficient condition for games with BPR-type cost functions to admit a \ac{DIOT}.

\begin{theorem}\label{thm:toll-vector}
Consider a game $\game = (\graph,\pairs,\boldsymbol{\cost})$ such that there is $\beta \in \R_+$ with $\cost_\edge \in \mathcal{C}_{\BPR}(\beta)$ for all $\edge \in \edges$. Let $\boldsymbol{\toll}$ be a toll vector such that
\begin{equation}\label{eq:TollOptEq}
\sum_{\edge\in\route}\Bigl(  \toll_{\edge}+ \tfrac{\degree}{\degree+1} \interc_{\edge}\Bigr) \leq
\sum_{\edge\in\routealt}\Bigl( \toll_{\edge}+ \tfrac{\degree}{\degree+1} \interc_{\edge}\Bigr).
\end{equation}
for all $\pair \in \pairs$ and all $\route \in \opt{\routes}^\pair$ and all $\routealt \in \routes^\pair$. Then, $\boldsymbol{\toll}$ is a \ac{DIOT}.
\end{theorem}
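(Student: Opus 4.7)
My plan is to show that for every demand vector $\boldsymbol{\rate}$, the social-optimum flow $\opt{\boldsymbol{\flow}}(\boldsymbol{\rate})$ of $\game$ is itself a Wardrop equilibrium of the tolled game $\game^{\boldsymbol{\toll}}$; the standard essential uniqueness of Wardrop equilibria will then force every equilibrium of $\game^{\boldsymbol{\toll}}$ to attain the optimum social cost $\Opt(\game)$, which is precisely the DIOT property.

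The key algebraic fact driving the argument is that for BPR-type cost functions of a common degree $\degree$ the marginal social cost on an edge $\edge$ satisfies
\[
\frac{d}{d\load_\edge}\bigl[\load_\edge\,\cost_\edge(\load_\edge)\bigr] = \interc_\edge + (\degree+1)\,\slope_\edge\load_\edge^{\degree} = (\degree+1)\,\cost_\edge(\load_\edge) - \degree\,\interc_\edge.
\]
Since $L$ is convex in $\boldsymbol{\flow}$, the KKT conditions for $\opt{\boldsymbol{\flow}}(\boldsymbol{\rate})$ are necessary and sufficient: for every pair $\pair\in\pairs$, every used path $\route\in\routes^\pair$ with $\opt{\flow}_\route(\boldsymbol{\rate}) > 0$, and every $\routealt\in\routes^\pair$, the sum of marginal social costs along $\route$ is at most that along $\routealt$. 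Substituting the identity above and dividing by $\degree+1$, this KKT inequality becomes
\[
\sum_{\edge\in\route}\Bigl[\cost_\edge(\opt{\load}_\edge) - \tfrac{\degree}{\degree+1}\,\interc_\edge\Bigr] \leq \sum_{\edge\in\routealt}\Bigl[\cost_\edge(\opt{\load}_\edge) - \tfrac{\degree}{\degree+1}\,\interc_\edge\Bigr].
\]

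Because $\opt{\flow}_\route(\boldsymbol{\rate}) > 0$ places $\route$ in $\opt{\routes}^\pair$, the hypothesis~\eqref{eq:TollOptEq} supplies the complementary inequality $\sum_{\edge\in\route}[\toll_\edge + \tfrac{\degree}{\degree+1}\interc_\edge] \leq \sum_{\edge\in\routealt}[\toll_\edge + \tfrac{\degree}{\degree+1}\interc_\edge]$. Adding the two inequalities causes the $\tfrac{\degree}{\degree+1}\interc_\edge$ shifts to cancel exactly, leaving $\sum_{\edge\in\route}[\cost_\edge(\opt{\load}_\edge) + \toll_\edge] \leq \sum_{\edge\in\routealt}[\cost_\edge(\opt{\load}_\edge) + \toll_\edge]$, which is exactly the Wardrop equilibrium condition for the tolled costs $\cost_\edge + \toll_\edge$. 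Hence $\opt{\boldsymbol{\flow}}(\boldsymbol{\rate}) \in \Eq(\game^{\boldsymbol{\toll}})$.

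To close, I would invoke the Beckmann-McGuire-Winsten characterization of Wardrop equilibria as minimizers of a convex potential: any two WE of the tolled game share the same load on every edge where the tolled cost is strictly monotone. For BPR costs this covers every edge with $\slope_\edge > 0$, on which the contribution $\load_\edge\cost_\edge(\load_\edge)$ to $L$ is therefore constant across WE. The main potential obstacle is handling the remaining edges, where $\slope_\edge = 0$, the tolled cost is constant, and the load is not pinned down by the potential alone; on such edges the per-edge contribution $\interc_\edge\load_\edge$ to $L$ is linear in the load, but the equality of the Beckmann value across WE together with the hypothesis constrains the residual freedom so that $L$ still attains $\Opt(\game)$, completing the proof that $\boldsymbol{\toll}$ is a DIOT.
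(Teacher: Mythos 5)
Your proof follows essentially the same route as the paper's: both establish that the social optimum is a Wardrop equilibrium of the tolled game by combining the local optimality (KKT) conditions of the optimum flow with hypothesis \eqref{eq:TollOptEq}, via the BPR identity that the marginal social cost equals $(\degree+1)\cost_\edge(\load_\edge)-\degree\interc_\edge$; your device of adding the two inequalities so that the $\tfrac{\degree}{\degree+1}\interc_\edge$ shifts cancel is just a tidier arrangement of the paper's chain of equalities carrying the slack constant $\const(\route,\routealt)\ge 0$. The only point where you differ is in explicitly flagging the passage from ``the optimum is a tolled equilibrium'' to ``every tolled equilibrium is optimal'' (the constant-cost edges with $\slope_\edge=0$), a step the paper leaves entirely implicit; your sketch there is not fully worked out, but it is no less rigorous than the original.
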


\begin{proof}
Fix a demand vector $\boldsymbol{\rate}\in\R_{+}^{\pairs}$ and a corresponding optimum flow $\opt{\boldsymbol{\flow}}(\boldsymbol{\rate})$ in $\game$ arbitrarily. We denote by $\opt{\load}_e = \sum_{\route \in \routes : \edge \in \route} \opt{\flow}_\route(\boldsymbol{\rate})$ the load imposed on edge $\edge$ by $\opt{\boldsymbol{\flow}}(\boldsymbol{\rate})$. The local optimality conditions of $\opt{\boldsymbol{\flow}}(\boldsymbol{\rate})$ imply that for all $\pair\in\pairs$ and all $\route,\routealt\in\routes^{\pair}$ with $\opt{\flow}_{\route}(\boldsymbol{\rate})>0$ we have
\begin{align*}
\sum_{\edge\in\route} \cost_{\edge}(\opt{\load}_{\edge}) + \cost_{\edge}'(\opt{\load}_{\edge})\opt{\load}_{\edge} &\leq \sum_{\edge\in\routealt} \cost_{\edge}(\opt{\load}_{\edge}) + \cost_{\edge}'(\opt{\load}_{\edge})\opt{\load}_{\edge} .
\end{align*}
This implies that for all $\pair \in \pairs$ and all $\route, \routealt \in \routes^{\pair}$ with $\opt{\flow}_\route(\boldsymbol{\rate})>0$, there exists a non-negative constant $\const(\route,\routealt) \ge 0$ such that
\begin{align}
\const(\route,\routealt) + \sum_{\edge \in \route} \Bigl( (\degree+1) \slope_{\edge} \opt{\load}_{\edge}^{\degree} + \interc_{\edge} \Bigr) &=
\sum_{\edge\in\routealt}  \Bigl( (\degree+1) \slope_{\edge} \opt{\load}_{\edge}^{\degree} + \interc_{\edge} \Bigr). \label{eq:LocalOpt}
\end{align}
We proceed to show that when the toll vector $\boldsymbol{\toll}$ satisfies \eqref{eq:TollOptEq}, then $\opt{\boldsymbol{\flow}}(\boldsymbol{\rate})$ is a Wardop equilibrium of $\game^{\toll}$. To this end, consider arbitrary $\route,\routealt\in\routes^{\pair}$ with $\opt{\flow}_{\route}(\boldsymbol{\rate})>0$. We have
\begin{align*}
\sum_{\edge \in \route} \Bigl( \cost_{\edge}(\opt{\load}_{\edge}) + \toll_{\edge} \Bigr) &= \sum_{\edge\in\route} \Bigl(\slope_{\edge} \opt{\load}_{\edge}^{\degree} + \interc_{\edge} + \toll_{\edge}\Bigr)\\
&= \sum_{\edge\in\route} \Bigl( (\degree+1) \slope_{\edge} \opt{\load}_{\edge}^{\degree} + \interc_{\edge}\Bigr) - \sum_{\edge\in\route} \Bigl(\degree \slope_{\edge} \opt{\load}_{\edge}^{\degree} - \toll_{\edge} \Bigr)\\
&=
\sum_{\edge\in\routealt}\Bigl((\degree+1)\slope_{\edge} \opt{\load}_{\edge}^{\degree} + \interc_{\edge}\Bigr) - \sum_{\edge\in\route} \Bigl( \degree \slope_{\edge} \opt{\load}_{\edge}^{\degree} - \toll_{\edge}\Bigr) - \const(\route,\routealt),\\
&= \sum_{\edge \in \routealt} \Bigl( \cost_{\edge}(\opt{\load}_{\edge}) + \toll_{\edge} \Bigr) + \sum_{\edge \in \routealt} \Bigl( \degree \slope_{\edge} \opt{\load}_{\edge}^{\degree} - \toll_{\edge} \Bigr) - \sum_{\edge\in\route} \Bigl( \degree \slope_{\edge} \opt{\load}_{\edge}^{\degree} - \toll_{\edge}\Bigr) - \const(\route,\routealt),
\end{align*}
where the third equality comes from \eqref{eq:LocalOpt}. By assumption, the toll vector $\boldsymbol{\toll}$ satisfies equation \eqref{eq:TollOptEq} which implies $\sum_{\edge \in \route} \toll_{\edge} - \sum_{\edge \in \routealt} \toll_{\edge} \leq -\frac{\degree}{\degree+1}\left(\sum_{\edge \in \route} \interc_{\edge} - \sum_{\edge \in \routealt} \interc_{\edge}\right)$. Therefore,
\begin{align*}
\sum_{\edge\in\route} \Bigl(\cost_{\edge}(\opt{\load}_{\edge}) + \toll_{\edge}\Bigr)
&\leq \sum_{\edge\in\routealt}\Bigl(\cost_{\edge}(\opt{\load}_{\edge}) + \toll_{\edge} + \degree \slope_{\edge} \opt{\load}_{\edge}^{\degree} + \tfrac{\degree}{\degree+1}\interc_{\edge}\Bigr) - \sum_{\edge\in\route} \Bigl(\degree\slope_{\edge} \opt{\load}_{\edge}^{\degree} + \tfrac{\degree}{\degree+1}\interc_{\edge}
\Bigr) -\const(\route,\routealt)\\
&=
\sum_{\edge\in\routealt}\Bigl(\cost_{\edge}(\opt{\load}_{\edge}) + \toll_{\edge}\Bigr) + \tfrac{\degree}{\degree+1} \const(\route,\routealt) - \const(\route,\routealt)\\
&= \sum_{\edge\in\routealt}\Bigl(\cost_{\edge}(\opt{\load}_{\edge}) + \toll_{\edge}\Bigr) - \tfrac{1}{\degree+1} \const(\route,\routealt),
\end{align*}
where the first equality stems from \eqref{eq:LocalOpt}.
The statement of the theorem follows from the fact that $\const(\route,\routealt)\ge 0$.
\end{proof}

As an immediate corollary of this theorem, we obtain the existence of \ac{DIOT}s in arbitrary multi-commodity networks with BPR-type cost functions as long as negative tolls are allowed.

\begin{corollary}
Consider a game $\game = (\graph,\pairs,\boldsymbol{\cost})$ with BPR-type cost functions. Then there exists a \ac{DIOT} for $\game$.
\end{corollary}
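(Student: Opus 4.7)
The plan is to apply the preceding theorem by exhibiting an explicit toll vector that trivially satisfies the sufficient condition \eqref{eq:TollOptEq}. Since the theorem already does all the heavy lifting---translating the system-optimum first-order conditions into a Wardrop equilibrium condition under tolls---all that remains is to produce a $\boldsymbol{\toll}\in\R^\edges$ for which the path sums $\sum_{\edge\in\route}\bigl(\toll_\edge+\tfrac{\degree}{\degree+1}\interc_\edge\bigr)$ can be arranged so that the required inequality between any $\route\in\opt{\routes}^\pair$ and any $\routealt\in\routes^\pair$ holds.

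The simplest way to kill the condition is to make each edge-wise summand vanish. Concretely, I would set
\[
\toll_\edge \;=\; -\tfrac{\degree}{\degree+1}\interc_\edge \qquad \text{for every } \edge\in\edges.
\]
With this choice, $\toll_\edge+\tfrac{\degree}{\degree+1}\interc_\edge=0$ for every edge, so for any pair $\pair\in\pairs$ and any two paths $\route,\routealt\in\routes^\pair$ one has
\[
\sum_{\edge\in\route}\Bigl(\toll_\edge+\tfrac{\degree}{\degree+1}\interc_\edge\Bigr) \;=\; 0 \;=\; \sum_{\edge\in\routealt}\Bigl(\toll_\edge+\tfrac{\degree}{\degree+1}\interc_\edge\Bigr),
\]
and in particular \eqref{eq:TollOptEq} holds with equality. \cref{thm:toll-vector} then immediately yields that $\boldsymbol{\toll}$ is a \ac{DIOT}, regardless of the number of commodities or the topology of $\graph$.

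There is no real obstacle here: the work has already been done in \cref{thm:toll-vector}, and the only decision is which toll vector to use. The deeper question---whether a \emph{non-negative} \ac{DIOT} always exists---is genuinely harder (and, as the authors foreshadow, fails in general, requiring further structural assumptions such as the \ac{DAMG} property), but for the corollary as stated, allowing negative tolls makes the argument a one-line application of the theorem.
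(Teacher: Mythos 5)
Your proposal is correct and coincides with the paper's own argument: the authors also take $\toll_\edge = -\tfrac{\degree}{\degree+1}\interc_\edge$ (what they call the trivial \ac{DIOT}), note that it satisfies \eqref{eq:TollOptEq} trivially, and invoke \cref{thm:toll-vector}. Nothing is missing.
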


\begin{proof}
Setting the toll vector $\hat{\boldsymbol{\toll}} = (\hat{\toll}_{\edge})_{\edge\in\edges}$ as
\begin{equation}\label{eq:TrivialOptimalToll}
\hat{\toll}_{\edge} = - \frac{\degree}{\degree+1}\interc_{\edge},
\end{equation}
obviously satisfies \eqref{eq:TollOptEq}, so that the claim follows from \cref{thm:toll-vector}. 
\end{proof}

In the following, we  call $\hat{\boldsymbol{\toll}}$ the \emph{trivial \ac{DIOT}}. 
The necessary condition of \cref{thm:toll-vector} implies in particular that $\sum_{\edge \in \route} \toll_\edge + \frac{\degree}{\degree+1} \interc_\edge$ must be equal for all paths $\route \in \opt{\routes}^\pair$ and all pairs $\pair \in \pairs$. We proceed to show that this is actually a necessary condition for a \ac{DIOT}.

\begin{theorem}\label{pr:ConverseOptimalToll}
Consider a game  $\game = (\graph,\pairs,\boldsymbol{\cost})$ with BPR-type cost functions.
If $\boldsymbol{\toll}$ is a \ac{DIOT} for $\game$, then
\begin{align*}
\sum_{\edge\in\route}\Bigl(  \toll_{\edge}+ \tfrac{\degree}{\degree+1} \interc_{\edge}\Bigr) =
\sum_{\edge\in\routealt}\Bigl( \toll_{\edge}+ \tfrac{\degree}{\degree+1} \interc_{\edge}\Bigr).
\end{align*}
for all $\pair\in\pairs$ and all $\route,\routealt\in \opt{\routes}^{\pair}$.
\end{theorem}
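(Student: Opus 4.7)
The strategy is to combine two sets of optimality conditions satisfied by the optimum flow: the first-order optimality condition for $\opt{\boldsymbol{\flow}}(\boldsymbol{\rate})$ being a social optimum in $\game$, and the Wardrop condition for $\opt{\boldsymbol{\flow}}(\boldsymbol{\rate})$ being an equilibrium of $\game^{\boldsymbol{\toll}}$ (which holds by the DIOT hypothesis). Each pair of conditions contributes an \emph{equality} on routes that are simultaneously used at a given demand, and an appropriate linear combination of the two cancels the load-dependent monomial $\slope_\edge\opt{\load}_\edge^{\degree}$, leaving exactly the demand-independent quantity $\sum_{\edge\in\route}(\toll_\edge+\tfrac{\degree}{\degree+1}\interc_\edge)$.

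First I would fix an arbitrary demand vector $\boldsymbol{\rate}\in\R_{+}^{\pairs}$ together with an optimum flow $\opt{\boldsymbol{\flow}}(\boldsymbol{\rate})$ in $\game$ and its induced load profile $\opt{\boldsymbol{\load}}$. For any pair $\pair\in\pairs$ and any routes $\route,\routealt\in\routes^{\pair}$ carrying positive flow, the local optimality condition \eqref{eq:LocalOpt} gives the first equality below, while the hypothesis that $\boldsymbol{\toll}$ is a DIOT forces $\opt{\boldsymbol{\flow}}(\boldsymbol{\rate})$ to be a \ac{WE} of $\game^{\boldsymbol{\toll}}$ and so yields the second:
\begin{align*}
\sum_{\edge\in\route}\bigl(\interc_\edge+(\degree+1)\slope_\edge\opt{\load}_\edge^{\degree}\bigr)
&=\sum_{\edge\in\routealt}\bigl(\interc_\edge+(\degree+1)\slope_\edge\opt{\load}_\edge^{\degree}\bigr),\\
\sum_{\edge\in\route}\bigl(\interc_\edge+\slope_\edge\opt{\load}_\edge^{\degree}+\toll_\edge\bigr)
&=\sum_{\edge\in\routealt}\bigl(\interc_\edge+\slope_\edge\opt{\load}_\edge^{\degree}+\toll_\edge\bigr).
\end{align*}
Multiplying the second line by $(\degree+1)$, subtracting the first, and dividing by $(\degree+1)$ eliminates the terms involving $\opt{\load}_\edge^{\degree}$ and reduces to $\sum_{\edge\in\route}(\toll_\edge+\tfrac{\degree}{\degree+1}\interc_\edge)=\sum_{\edge\in\routealt}(\toll_\edge+\tfrac{\degree}{\degree+1}\interc_\edge)$. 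This proves the theorem whenever $\route$ and $\routealt$ are simultaneously used by one optimum flow.

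To reach all $\route,\routealt\in\opt{\routes}^{\pair}$, I would then argue that any two such paths can be realized as positive-flow paths of a \emph{common} optimum. One clean way is to inflate the demand on pair $\pair$: as $\rate^\pair$ grows, the common marginal cost $\lambda$ of the used paths in the optimum of $\game$ grows without bound (driven by the load-dependent term $(\degree+1)\slope_\edge\opt{\load}_\edge^{\degree}$), and consequently every path in $\opt{\routes}^{\pair}$—whose ``zero-load'' marginal cost $\sum_{\edge\in\route}\interc_\edge$ is finite—must eventually enter the support of the optimum, for otherwise the KKT inequality $A_\route\ge\lambda$ would be violated. Alternatively, one can chain: pick intermediate demands at which consecutive paths in a sequence $\route=\tilde\route_0,\tilde\route_1,\dots,\tilde\route_k=\routealt$ are pairwise co-used, and apply the pairwise equality along the chain using transitivity.

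The main obstacle is precisely this last step: turning ``$\route$ is used at \emph{some} demand and $\routealt$ is used at \emph{some} demand'' into ``both are used at a \emph{common} demand.'' This requires a monotonicity/continuity statement for the support of $\opt{\boldsymbol{\flow}}(\boldsymbol{\rate})$ as $\boldsymbol{\rate}$ varies, and needs extra care in degenerate situations—e.g., paths made entirely of constant-cost edges ($\slope_\edge=0$), or paths shared across commodities—where the marginal-cost-grows-to-infinity argument must be handled separately. Once this connectivity is secured, however, the algebraic core of the proof (the two equalities above and the cancellation of the load-dependent term) gives the conclusion immediately.
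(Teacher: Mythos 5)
Your algebraic core is exactly the paper's: combining the local optimality condition for $\opt{\boldsymbol{\flow}}(\boldsymbol{\rate})$ with the Wardrop condition for $\game^{\boldsymbol{\toll}}$, scaling by $\degree+1$ and subtracting to cancel the terms $\slope_\edge\opt{\load}_\edge^{\degree}$, gives precisely \eqref{eq:equality} for paths co-used at a single demand. The gap is the step you yourself flag as ``the main obstacle,'' and neither of your two suggestions closes it. The demand-inflation argument is false: for an unused path $\routealt$ the KKT inequality involves $\sum_{\edge\in\routealt}\bigl(\interc_\edge+(\degree+1)\slope_\edge\opt{\load}_\edge^{\degree}\bigr)$ evaluated at the \emph{actual} loads on the edges of $\routealt$, which need not be zero since those edges may be shared with used paths; so the left-hand side grows along with $\lambda$ and the path need never re-enter the support. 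The Braess network of the introduction is a concrete counterexample: the zig-zag path lies in $\opt{\routes}^{1}$ (it is the unique optimal path for small demand) but is excluded from the optimum support for all sufficiently large demand, because its marginal cost $2\opt{\load}_1+2\opt{\load}_4$ eventually exceeds $2\opt{\load}_1+1$. Your second suggestion (chaining through intermediate demands) is the correct idea but is stated as a wish rather than proved: you do not exhibit the intermediate demands nor justify that consecutive paths in the chain are ever simultaneously used.

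The paper supplies exactly this missing construction. It parametrizes the segment $\boldsymbol{\rate}(\lambda)=(1-\lambda)\boldsymbol{\rate}+\lambda\boldsymbol{\rate}'$ between a demand where $\route$ is used and one where $\routealt$ is used, invokes Hall's theorem on the continuity of equilibrium path flows in the demand (applied to the marginal-cost game, so that it governs the optimum flows), partitions $[0,1]$ into finitely many intervals on which the support of $\opt{\boldsymbol{\flow}}(\boldsymbol{\rate}(\lambda))$ is constant, and then uses continuity of the path costs at each breakpoint $\lambda_j$ to show that the equality \eqref{eq:equality} holds across the \emph{union} of the supports of the two adjacent intervals. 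This is the transitivity link your sketch presupposes. To repair your proof you would need to import both ingredients: the continuity of the optimal path flows in $\boldsymbol{\rate}$ (a nontrivial result you would have to cite or prove), and the limiting argument at the breakpoints that glues consecutive supports together.
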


\begin{proof}
Let $\route, \routealt \in \opt{\routes}^\pair$ be arbitrary. By definition of $\opt{\routes}^\pair$ there are demands vectors $\boldsymbol{\rate}, \boldsymbol{\rate}' \in \R_+^{\pairs}$ such that $\opt{\flow}_\route(\boldsymbol{\rate}) > 0$ and $\opt{\flow}_{\routealt}(\boldsymbol{\rate}') > 0$. \cite{hall1978} showed that the path flow functions of a Wardrop equilibrium are continuous functions in the travel demand. For $\lambda \in [0,1]$, let $\boldsymbol{\rate}(\lambda) = (1-\lambda) \boldsymbol{\rate} + \lambda \boldsymbol{\rate'}$ parametrize the travel demands on the convex combination of $\boldsymbol{\rate}$ and $\boldsymbol{\rate'}$. Then, by Hall's result,
there are continuous path flow functions $\flow_\route(\boldsymbol{\rate}(\cdot)) : [0,1] \to \R_+^\edges$ for all $\route \in \routes$ such that $\boldsymbol{\flow}(\boldsymbol{\rate}(\lambda))$ is a Wardrop equilibrium for the travel demand vector $\boldsymbol{\rate}(\lambda)$ for all $\lambda \in [0,1]$. As the system optimal flow $\opt{\boldsymbol{\flow}}$ is a Wardrop equilibrium with respect to the marginal cost function (cf.~\cite{BecMcGWin:Yale1956}), the same holds for the system optimum flow vector $\opt{\boldsymbol{\flow}}$,
i.e., there are continuous path flow functions $\opt{\flow}_\route(\boldsymbol{\rate}(\cdot)) : [0,1] \to \R_+^\edges$ for all $\route \in \routes$ such that $\opt{\boldsymbol{\flow}}(\boldsymbol{\rate}(\lambda))$ is an optimal flow for the travel demand vector $\boldsymbol{\rate}(\lambda)$ for all $\lambda \in [0,1]$. By the continuity of of the path flow functions, there are breakpoints $\lambda_0, \lambda_1, \lambda_2, \dots, \lambda_t \in [0,1]$ such that $0 = \lambda_0 < \lambda_1 < \lambda_2 < \dots < \lambda_t = 1$ and the support $\mathcal{S}(\opt{\boldsymbol{f}}(\boldsymbol{\rate}(\lambda)) = \{\route \in \routes : \opt{\flow}_\route(\boldsymbol{\rate}(\lambda)) > 0\}$ is constant on $(\lambda_j, \lambda_{j+1})$ for all $j \in \{0,\dots,t-1\}$.
We claim that
\begin{align*}
\sum_{\edge \in \route} \Bigl(\toll_\edge + \tfrac{\degree}{\degree+1} \interc_\edge \Bigr) = \sum_{\edge \in \routealt} \Bigl(\toll_\edge + \tfrac{\degree}{\degree+1} \interc_\edge \Bigr) \text{ for all }\route,\routealt \in \bigcup_{j \in \{0,\dots,t-1\}} \mathcal{S}\bigl(\opt{\boldsymbol{f}}\bigl(\boldsymbol{\rate}\bigl(\tfrac{\lambda_j + \lambda_{j+1}}{2}\bigr)\bigr)\bigr).
\end{align*}
Since, by construction, $\route \in \mathcal{S}\bigl(\opt{\boldsymbol{f}}\bigl(\boldsymbol{\rate}\bigl(\tfrac{\lambda_0 + \lambda_{1}}{2}\bigr)\bigr)\bigr)$ and $\routealt \in \mathcal{S}\bigl(\opt{\boldsymbol{f}}\bigl(\boldsymbol{\rate}\bigl(\tfrac{\lambda_{t-1} + \lambda_{t}}{2}\bigr)\bigr)\bigr)$, the claim implies the result.

We proceed to prove the claim by induction on the number of breakpoints. First, let $j \in \{0,\dots,t-1\}$ be arbitrary and consider a travel demand $\boldsymbol{\rate}_{j,j+1} = \boldsymbol{\rate}\bigl(\tfrac{\lambda_j + \lambda_{j+1}}{2}\bigr)$. We denote by $\opt{\load}_\edge(\boldsymbol{\rate}_{j,j+1}) = \sum_{\route \in \routes : \edge \in \route} \opt{\flow}_\route(\boldsymbol{\rate}_{j,j+1}))$ the load imposed by the corresponding optimal flow.

Since  $\boldsymbol{\toll}$ is a \ac{DIOT} for $\game$, the optimal flow $\opt{\boldsymbol{\flow}}(\boldsymbol{\rate}_{j,j+1})$ is a Wardrop equilibrium with respect to $\boldsymbol{\toll}$, i.e., $\sum_{\edge \in \route} (\cost_{\edge}(\opt{\load}_{\edge}(\boldsymbol{\rate}_{j,j+1})) + \toll_{\edge}) = \sum_{\edge \in \routealt} (\cost_{\edge}(\opt{\load}_{\edge}(\boldsymbol{\rate}_{j,j+1})) + \toll_{\edge})$ for all $\route,\routealt \in \mathcal{S}(\opt{\boldsymbol{\flow}}(\boldsymbol{\rate}_{j,j+1}))$ which implies
\begin{equation}
\sum_{\edge\in\route}\left(\slope_{\edge} \opt{\load}_{\edge}(\boldsymbol{\rate}_{j,j+1})^{\degree} + \interc_{\edge} + \toll_{\edge}\right)
=
\sum_{\edge\in\routealt}\left(\slope_{\edge} \opt{\load}(\boldsymbol{\rate}_{j,j+1})_{\edge}^{\degree} + \interc_{\edge} + \toll_{\edge}\right). \label{eq:necessary_1}
\end{equation}
for all $\route,\routealt \in \mathcal{S}(\opt{\boldsymbol{\flow}}(\boldsymbol{\rate}_{j,j+1}))$. By the local optimality conditions of $\opt{\boldsymbol{\flow}}(\boldsymbol{\rate}_{j,j+1}))$, we further have $$\sum_{\edge \in \route} \cost_{\edge}(\opt{\load}_{\edge}(\boldsymbol{\rate}_{j,j+1})) + \cost_{\edge}'(\opt{\load}_{\edge}(\boldsymbol{\rate}_{j,j+1})) \opt{\load}_{\edge}(\boldsymbol{\rate}_{j,j+1}) = \sum_{\edge \in \routealt} \cost_{\edge}(\opt{\load}_{\edge}(\boldsymbol{\rate}_{j,j+1})) + \cost_{\edge}'(\opt{\load}_{\edge}(\boldsymbol{\rate}_{j,j+1})) \opt{\load}_{\edge}(\boldsymbol{\rate}_{j,j+1}),$$
which is equivalent to 
\begin{equation}
\sum_{\edge\in\route}\Bigl((\degree+1)\slope_{\edge} \opt{\load}_{\edge}(\boldsymbol{\rate}_{j,j+1})^{\degree} + \interc_{\edge} \Bigr)
=
\sum_{\edge\in\routealt}\Bigl((\degree+1)\slope_{\edge} \opt{\load}_{\edge}(\boldsymbol{\rate}_{j,j+1})^{\degree} + \interc_{\edge} \Bigr) \label{eq:necessary_2}
\end{equation}
for all $\route,\routealt \in \mathcal{S}(\opt{\boldsymbol{\flow}}(\boldsymbol{\rate}_{j,j+1}))$.
Subtracting \eqref{eq:necessary_2} from $(\degree+1)$ times \eqref{eq:necessary_1} and dividing by $\degree+1$ we obtain
\begin{equation}
\label{eq:equality}
\sum_{\edge\in\route}\left( \tfrac{\degree}{\degree+1} \interc_{\edge} + \toll_{\edge}\right)
=
\sum_{\edge\in\routealt}\left( \tfrac{\degree}{\degree+1}\interc_{\edge} +  \toll_{\edge}\right)
\end{equation}
for all $j \in \{0,\dots,t-1\}$ and all $\route,\routealt \in \mathcal{S}(\opt{\boldsymbol{\flow}}(\boldsymbol{\rate}_{j,j+1}))$. To finish the proof, note that \eqref{eq:necessary_1} and \eqref{eq:necessary_2} imply that for all $j \in \{0,\dots,t-1\}$ there exist functions $f : [0,1] \to \R_+$ and $g : [0,1] \to \R_+$ such that
\begin{align*}
f_{j,j+1}(\lambda) &= \sum_{\edge \in \route} \slope_\edge \opt{\load}_\edge(\boldsymbol{\rate}(\lambda))^\degree	+ \interc_\edge + \toll_\edge\\
g_{j,j+1}(\lambda) &= \sum_{\edge \in \route} (\degree+1)\slope_\edge \opt{\load}_\edge(\boldsymbol{\rate}(\lambda))^\degree	+ \interc_\edge 
\end{align*}
for all $\route \in \mathcal{S}(\opt{\boldsymbol{\flow}}(\boldsymbol{\rate}_{j,j+1}))$ and all $\lambda \in (\lambda_j, \lambda_{j+1})$. By standard continuity arguments, these equations hold also for all $\lambda \in [\lambda_j, \lambda_{j+1}]$. In particular, we obtain for $\boldsymbol{\rate}_j = \boldsymbol{\rate}(\lambda_j)$ the equations
\begin{align*}
\sum_{\edge\in\route}\Bigl((\degree+1)\slope_{\edge} \opt{\load}_{\edge}(\boldsymbol{\rate}_{j,j+1})^{\degree} + \interc_{\edge} \Bigr)
&=
\sum_{\edge\in\routealt}\Bigl((\degree+1)\slope_{\edge} \opt{\load}_{\edge}(\boldsymbol{\rate}_{j,j+1})^{\degree} + \interc_{\edge} \Bigr)	\\
\sum_{\edge\in\route}\left( \tfrac{\degree}{\degree+1} \interc_{\edge} + \toll_{\edge}\right)
&=
\sum_{\edge\in\routealt}\left( \tfrac{\degree}{\degree+1}\interc_{\edge} +  \toll_{\edge}\right)
\end{align*}
for all $\route,\routealt \in \mathcal{S}(\opt{\boldsymbol{\flow}}(\boldsymbol{\rate}_{j-1,j})) \cup \mathcal{S}(\opt{\boldsymbol{\flow}}(\boldsymbol{\rate}_{j,j+1}))$, and the claim follows from the same arguments as before.
\end{proof}

We proceed to show that the set of BPR-type cost functions
is the largest set of cost functions that  guarantee the existence of a \ac{DIOT}, even for single-commodity networks consisting of two parallel edges. The proof is deferred to the appendix.

\begin{theorem}
\label{thm:characterization}
Let $c$ be twice continuously differentiable, strictly semi-convex and strictly increasing, but not of BPR-type. Then there is a congestion game $\game = (\graph,\pairs,\boldsymbol{\cost})$ with two parallel edges and cost functions $\cost(\load)$ and $\cost(\load) + \interc$ for some $\interc \in \R_{+}$ that does not have a \ac{DIOT}.
\end{theorem}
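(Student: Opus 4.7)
The plan is to prove the contrapositive: assume that for every $\interc > 0$ the parallel two-edge network with costs $c$ and $c + \interc$ admits a \ac{DIOT}, and deduce that $c$ must be of BPR-type. Fix $\interc > 0$ and a corresponding \ac{DIOT} $(\toll_1, \toll_2)$. Strict semi-convexity of $x \mapsto x c(x)$ makes the social cost strictly convex, so the system optimum flow $(f_1(\rate), f_2(\rate))$ is unique and, by the continuity result of \citet{hall1978}, depends continuously on $\rate$. There is a threshold $\mu_0(\interc)$, determined by $c(\mu_0) + \mu_0 c'(\mu_0) = c(0) + \interc$, below which only the first edge is used and above which both are used. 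For $\rate > \mu_0(\interc)$, the Wardrop equilibrium condition under the tolls and the first-order optimality condition for the system optimum give, respectively,
\begin{equation*}
c(f_1) - c(f_2) = C_1 := \interc + \toll_2 - \toll_1, \qquad f_1 c'(f_1) - f_2 c'(f_2) = C_2 := \interc - C_1,
\end{equation*}
with both constants independent of $\rate$.

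Setting $F(s) := c^{-1}(c(s) + C_1)$, the first identity forces $f_1 = F(f_2)$. Substituting into the second identity and differentiating with respect to $f_2$, together with the chain-rule identity $F'(s) = c'(s)/c'(F(s))$, collapses it to
\begin{equation*}
\phi(F(s)) = \phi(s) \quad \text{for all } s > 0, \qquad \text{where } \phi(x) := \frac{x c''(x)}{c'(x)}.
\end{equation*}
In the variable $u := c(s)$ this reads $\Phi(u + C_1) = \Phi(u)$ for $\Phi := \phi \circ c^{-1}$; that is, $\Phi$ is $C_1$-periodic on its domain. Letting $\rate \searrow \mu_0(\interc)$ (so $f_2 \to 0$) identifies $C_1 = C_1(\interc) = c(\mu_0(\interc)) - c(0)$.

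The decisive step is to vary $\interc$. The function $\tilde g(x) := c(x) - c(0) + x c'(x)$ is strictly increasing by strict semi-convexity and satisfies $\tilde g(0) = 0$, so $\mu_0(\interc) = \tilde g^{-1}(\interc)$ is continuous, strictly increasing, and $\mu_0(\interc) \to 0^+$ as $\interc \to 0^+$. Since $c$ is strictly increasing, $C_1(\interc) = c(\mu_0(\interc)) - c(0)$ also tends to $0$ along a continuous strictly increasing curve, so $\Phi$ admits arbitrarily small positive periods; continuity then forces $\Phi$ to be constant on its domain. Hence $\phi \equiv \beta - 1$ on $(0, \infty)$ for some constant $\beta > 0$, and integrating $x c''(x) = (\beta - 1) c'(x)$ yields $c(x) = D + (A/\beta) x^\beta$---a BPR-type function, contradicting the hypothesis. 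The main obstacle is to ensure that the functional equation $\phi(F(s)) = \phi(s)$ really holds on all of $(0, \infty)$ for each admissible $\interc$ and that $C_1(\interc)$ can be taken arbitrarily small; without both, the density-of-periods argument that collapses $\Phi$ to a constant cannot be run.
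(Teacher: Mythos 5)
Your argument is correct and rests on the same core identity as the paper's proof: along the (coinciding) tolled equilibrium and system optimum, the Wardrop condition and the marginal-cost condition together force $\tfrac{x_1 c''(x_1)}{c'(x_1)}=\tfrac{x_2 c''(x_2)}{c'(x_2)}$ for the two edge loads, and the decisive device in both proofs is the limit $\interc\to 0$, which pins $x_2$ near $0$ and $x_1$ near $\mu_0(\interc)\to 0$. Where you differ is in how this identity is obtained and how the argument is closed. The paper derives the identity by comparing the ODEs that govern the evolution of the equilibrium and optimum loads in $\rate$, and then closes by a \emph{local} contradiction: it pre-selects $y$ and $\eps>0$ with $\phi(y)\neq\phi(z)$ for all $z\in(y,y+\eps)$ (possible because $c$ is not BPR) and shows that for small $\interc$ the pair $(x_2(\rate^*_\interc),x_1(\rate^*_\interc))=(y,\,(\cost^*)^{-1}(\cost^*(y)+\interc))$ violates this. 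You instead differentiate the algebraic constraint $F(s)c'(F(s))-sc'(s)=C_2$ with $F=c^{-1}(c(\cdot)+C_1)$, recast the identity as $C_1(\interc)$-periodicity of $\Phi=\phi\circ c^{-1}$, and conclude \emph{global} constancy of $\phi$, hence that $c$ is exactly BPR. Your route proves slightly more (it recovers the full converse characterization rather than a pointwise contradiction) and avoids the paper's preliminary choice of $(y,\eps)$.

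Two points deserve a little more care. First, the functional equation $\phi(F(s))=\phi(s)$ is only established for $s$ in the range of $f_2(\rate)$, $\rate>\mu_0(\interc)$; if $c$ is bounded this range is $\bigl(0,\,c^{-1}(\sup c - C_1)\bigr)$ rather than all of $(0,\infty)$, so the periodicity of $\Phi$ holds exactly for those $u$ with both $u$ and $u+C_1$ in $c\bigl((0,\infty)\bigr)$ --- which is still enough, but "for all $s>0$" as written needs this qualification. Second, "arbitrarily small positive periods imply constancy" requires a word on a half-line domain; the cleanest fix is the observation you already make, that $C_1(\interc)=c(\mu_0(\interc))-c(0)$ is continuous and covers a full interval $(0,\delta)$, so every $p>0$ is an integer multiple of some admissible period and is therefore itself a period, whence $\Phi(u')=\Phi(u)$ for any two points of the domain. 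Finally, like the paper, you implicitly assume $c'>0$ so that $\phi$ and $c^{-1}$ are well defined; with that convention the proof is complete.
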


A similar construction as in the proof of Theorem~\ref{thm:characterization} shows also that a network with two parallel edges with cost functions $c_1 \in \mathcal{C}_{\BPR}(\beta_1)$ and $c_2 \in \mathcal{C}_{\BPR}(\beta_2)$ does not admit a \ac{DIOT} if $\beta_1 \neq \beta_2$.

%----------------------------------------------------------------------
%%% NONNEGATIVE
%----------------------------------------------------------------------
\section{Nonnegative tolls}
\label{sec:nonnegative}

The trivial \ac{DIOT} toll $\hat{\boldsymbol{\toll}}$ is the trivial solution for both the sufficient condition for a \ac{DIOT} imposed by Theorem~\ref{thm:toll-vector} and the necessary condistion for a \ac{DIOT} shown in Theorem~\ref{pr:ConverseOptimalToll}. However, the trivial \ac{DIOT} is always negative so that the system designer needs to subsidize the traffic in order to enforce the optimum flow. One may wonder whether the conditions imposed by Theorem~\ref{thm:toll-vector} and Theorem~\ref{pr:ConverseOptimalToll} admit also a non-negative solution. Our next result shows that, for games played on a \acdef{DAMG}, a non-negative \ac{DIOT} can always be found. The proof is deferred to the appendix.

\begin{theorem}\label{thm:OptimalNonnegativeToll}
Consider a game $\game = (\graph,\pairs,\boldsymbol{\cost})$ with BPR-type cost functions where $\graph$ is a \ac{DAMG}. Then there exists a non-negative \ac{DIOT}  for $\game$.
\end{theorem}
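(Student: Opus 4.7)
The plan is to exploit the DAG structure by constructing a vertex potential $\pi \from \vertices \to \R_+$ such that the telescoping cancellation of $\pi$ along any path produces a toll vector satisfying the sufficient condition from \cref{thm:toll-vector} while remaining non-negative. Concretely, I would set
\begin{equation*}
\toll_\edge = \pi(\vertexalt) - \pi(\vertex) - \tfrac{\degree}{\degree+1}\interc_\edge
\qquad\text{for every edge $\edge=(\vertex\to\vertexalt)\in\edges$,}
\end{equation*}
so that along every path $\route$ from $\source^\pair$ to $\sink^\pair$ the sum appearing in \eqref{eq:TollOptEq} telescopes to $\pi(\sink^\pair)-\pi(\source^\pair)$, which depends only on the OD pair and not on the path. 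Hence \eqref{eq:TollOptEq} holds with equality for \emph{every} pair of paths $\route,\routealt\in\routes^\pair$, and \cref{thm:toll-vector} yields that $\boldsymbol{\toll}$ is a \ac{DIOT}.

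To ensure non-negativity, I would build $\pi$ by the classical longest-path computation in a DAG. Since $\graph$ is acyclic, fix a topological ordering $\vertex_1,\dots,\vertex_n$ of $\vertices$. Set $\pi(\vertex_1)=0$ and, inductively,
\begin{equation*}
\pi(\vertex_i)=\max\Bigl\{\,0\,,\,\max_{\edge=(\vertex\to\vertex_i)\in\edges}\bigl(\pi(\vertex)+\tfrac{\degree}{\degree+1}\interc_\edge\bigr)\Bigr\}.
\end{equation*}
The induction step is well-defined precisely because all tails of incoming edges precede $\vertex_i$ in the topological order (here the fact that $\graph$ is a multi-graph is immaterial, since the recursion ranges over the edge set, not over vertex pairs). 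By construction $\pi(\vertexalt)-\pi(\vertex)\geq \tfrac{\degree}{\degree+1}\interc_\edge$ for every edge $\edge=(\vertex\to\vertexalt)$, so $\toll_\edge\geq 0$ as required.

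The only point that needs a small sanity check is that a toll vector satisfying \eqref{eq:TollOptEq} with equality for \emph{all} pairs of paths in the same $\routes^\pair$ — not only for those in $\opt{\routes}^\pair$ — still satisfies the hypothesis of \cref{thm:toll-vector}; but this is immediate from $\opt{\routes}^\pair\subseteq\routes^\pair$. There is no genuine obstacle here, since the DAG assumption turns the potential construction into a one-line topological recursion; had the graph contained a directed cycle, the inequalities $\pi(\vertexalt)-\pi(\vertex)\geq \tfrac{\degree}{\degree+1}\interc_\edge$ could be inconsistent around the cycle (unless all $\interc_\edge$ on the cycle vanish), which is exactly the reason acyclicity is invoked. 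Combining the two displays above gives a non-negative \ac{DIOT} for $\game$ and completes the proof.
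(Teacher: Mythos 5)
Your proposal is correct and follows essentially the same route as the paper's proof: both take the trivial \ac{DIOT} $\hat{\toll}_{\edge}=-\tfrac{\degree}{\degree+1}\interc_{\edge}$ and add a vertex-potential difference built from a topological order of the \ac{DAMG}, so that the path sums in \eqref{eq:TollOptEq} are unchanged by telescoping (hence \cref{thm:toll-vector} still applies) while every edge toll becomes non-negative. The only difference is the choice of potential: you compute it by a longest-path recursion enforcing $\pi(\vertexalt)-\pi(\vertex)\ge\tfrac{\degree}{\degree+1}\interc_{\edge}$ edge by edge, whereas the paper takes the potential proportional to the topological index (adding $\delta_{\edge}\chi$ with a single scaling constant $\chi$), which is the same idea in a slightly different guise.
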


%----------------------------------------------------------------------
%% Cyclic network figure begins here

\begin{figure*}[tb]
\centering
\footnotesize

\begin{tikzpicture}[scale=1.1,edgestyle/.style={-},>=stealth]
\tikzstyle{every circle node}=[fill,inner sep=1.5pt,outer sep=0pt]

\coordinate (A) at (-\textwidth/6,0);
\coordinate (B) at (\textwidth/6,0);

\draw (A) node(A)[circle] {} node[left]{$\vertex\vphantom{bp}$};
\draw (B) node(B)[circle] {} node[right]{$\vertexalt\vphantom{bp}$};

\draw [edgestyle,->] (A) to [bend left = 60] node [midway, above] {$\cost_{1}(\load) = \load$} (B);
\draw [edgestyle,->] (A) to [bend left = 30]  node [midway, below] {$\cost_{2}(\load) = 1$} (B);
\draw [edgestyle,->] (B) to [bend left = 30] node [midway, above] {$\cost_{3}(\load) = \load$} (A);
\draw [edgestyle,->] (B) to [bend left = 60] node [midway, below] {$\cost_{4}(\load) = 1$} (A);
\end{tikzpicture}

\caption{A cyclic network with positive tolls.}
\label{fig:cyclic}
\end{figure*}
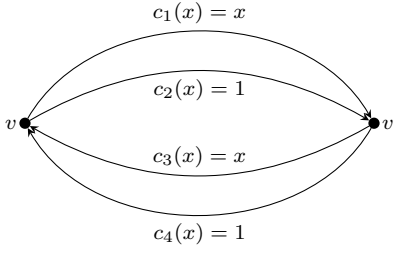

%% Cyclic network figure ends here
%----------------------------------------------------------------------

%\begin{remark}\label{re:DuoblePigou}
The condition that the graph $\graph$ is a \ac{DAMG} is sufficient for the existence of a non-negative \ac{DIOT}. It is not necessary, as the following counterexample  shows. 

\begin{example}
\label{ex:double-pigou}
Let $\game = (\graph,\pairs,\boldsymbol{\cost})$ with $\pairs=\braces{1,2}$, $\vertices=\braces{\vertex,\vertexalt}$,
$\source_{1}=\sink_{2}=\vertex$, $\source_{2}=\sink_{1}=\vertexalt$,
$\edge_{1},\edge_{2}\in (\source_{1}\to\sink_{1})$, 
$\edge_{3},\edge_{4}\in (\source_{2}\to\sink_{2})$, and the costs are as in \cref{fig:cyclic} (a).
The graph $\graph$ is not a \ac{DAMG}, but the following non-negative toll is a \ac{DIOT}:
\begin{align*}
\toll_{1}&= \frac{1}{2} &  \toll_{2}&=0, & \toll_{3}&= \frac{1}{2}, & \toll_{4}=0.
\end{align*}
\end{example}

We proceed to show that for graphs that contain a directed cycle, non-negative \acp{DIOT} need not exist, even in networks with affine costs.

\begin{proposition}
There are networks with affine costs that do not admit a non-negative \ac{DIOT}.
\end{proposition}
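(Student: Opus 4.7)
The plan is to exhibit a small cyclic three-vertex network with affine costs and two \ac{OD} pairs whose \ac{DIOT}-defining linear system is feasible but admits no non-negative solution.

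First I would fix the instance: take $\vertices = \{\source, \vertex, \sink\}$ and four directed edges $\edge_{1}=(\source,\vertex)$, $\edge_{2}=(\vertex,\sink)$, $\edge_{3}=(\source,\sink)$ and $\edge_{4}=(\sink,\vertex)$ with affine costs $\cost_{1}(\load)=\cost_{2}(\load)=\load$ and $\cost_{3}(\load)=\cost_{4}(\load)=1+\load$. The two \ac{OD} pairs are $(\source,\sink)$ and $(\source,\vertex)$, and the simple-path sets are $\routes^{1}=\{\edge_{1}\edge_{2},\, \edge_{3}\}$ and $\routes^{2}=\{\edge_{1},\, \edge_{3}\edge_{4}\}$. (The would-be path $\edge_{1}\edge_{2}\edge_{4}$ revisits $\vertex$ and is therefore not simple, so there are no other candidates.)

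Next I would verify that each of these four paths belongs to the corresponding $\opt{\routes}^{\pair}$, so that \cref{pr:ConverseOptimalToll} applies uniformly. Placing all the demand on a single pair decouples the joint program~\eqref{eq:SO} and reduces it to two single-commodity, two-path Pigou-type instances. A direct first-order computation then shows that at the demand vector $\boldsymbol{\rate}=(\rate,0)$ with $\rate>1/4$ both paths of pair $1$ carry positive optimum flow, and at $\boldsymbol{\rate}=(0,\rate)$ with $\rate>1$ both paths of pair $2$ do.

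Finally, with $\degree=1$, \cref{pr:ConverseOptimalToll} forces any \ac{DIOT} $\boldsymbol{\toll}$ to satisfy
\begin{equation*}
\toll_{1}+\toll_{2}=\toll_{3}+\tfrac{1}{2}
\qquad\text{and}\qquad
\toll_{1}=\toll_{3}+\toll_{4}+1.
\end{equation*}
Subtracting yields $\toll_{2}+\toll_{4}=-\tfrac{1}{2}<0$, which rules out every non-negative toll vector. Existence of \emph{some} (negative) \ac{DIOT} is still guaranteed by the corollary to \cref{thm:toll-vector} via the trivial toll $\hat{\toll}_{\edge}=-\interc_{\edge}/2$, so the obstruction is genuinely the sign constraint and not any failure of the underlying linear system. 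The main obstacle I anticipate is the certification of $\opt{\routes}^{\pair}$: because $\edge_{1}$ and $\edge_{3}$ are shared across the two pairs, the joint optimum is coupled and it is not immediate from path costs alone that every simple path attracts positive optimum flow for some demand; evaluating along the two coordinate axes of demand space, where the coupling vanishes and one is left with elementary one-commodity calculations, is what keeps this verification short.
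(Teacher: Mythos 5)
Your construction is correct: the two per-pair computations on the demand axes do certify that all four simple paths lie in the relevant $\opt{\routes}^{\pair}$, and applying Theorem~\ref{pr:ConverseOptimalToll} with $\degree=1$ indeed yields $\toll_{2}+\toll_{4}=-\tfrac{1}{2}$, which is incompatible with non-negativity. Methodologically this is the same route as the paper: build a cyclic affine-cost instance, identify the optimally-used paths, and let the necessary condition of Theorem~\ref{pr:ConverseOptimalToll} produce a linear system with no non-negative solution. The difference is the instance. The paper uses a \emph{single-commodity} network (the Braess graph augmented with the back-edge $v\to u$, six edges, four $\source$--$\sink$ paths), which requires a more delicate demand-dependent analysis of the optimal supports but yields the stronger message that even one \ac{OD} pair with affine costs can preclude non-negative \acp{DIOT} --- a relevant contrast with Corollary~\ref{co:BudgetConstraintSingleOD}, where single-commodity instances always admit budget-feasible tolls. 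Your two-commodity, four-edge example trades that extra strength for brevity: restricting to the coordinate axes of demand space decouples the commodities, so the support verification reduces to two elementary Pigou-type first-order computations. Both arguments establish the proposition as stated.
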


\begin{proof}
Consider the game $\game = (\graph, \pairs, \boldsymbol{\cost})$ with $\pairs = \{1\}$, $\vertices = \{\source,u,v,\sink\}$, and $e_1 \in (\source \to u)$, $e_2 \in (u \to \sink)$, $e_3 \in (v \to u)$, $e_4 \in (u \to v)$, $e_5 \in (\source \to v)$, and $e_6 \in (v,\sink)$ shown in \cref{fig:cost}. For an edge $e_j$ we denote by $\cost_{j}$ the corresponding cost function. In particular we assume
\begin{align*}
\cost_{1}(\load) &= \cost_{4}(\load) = \cost_{6}(\load) = 2, &
\cost_{3}(\load) &= 1, &
\cost_{2}(\load) &= \cost_{5}(\load) = 4x.
\end{align*}

%
%\begin{figure}
%\centering
%\begin{tikzpicture}[edgestyle/.style={-},>=stealth]
%\tikzstyle{every circle node}=[fill,inner sep=1.5pt,outer sep=0pt]
%\draw (0,0) node(v1)[circle]{} node[left]{$\source=v_{1}$};
%\draw (3,2.5) node (v3)[circle]{} node[above]{$v_{3}$};
%\draw (3,-2.5) node (v2)[circle]{} node[below]{$v_{2}$};
%\draw (6,0) node(v4)[circle]{} node[right]{$\sink=v_{4}$};
%\draw[->] (v1)  to node[pos=0.5,above left] {$c_{13}(x) = 4x$} (v3);
%\draw[->] (v3) to node[pos=0.5,above right] {$c_{34}(x) = 2$} (v4);
%\draw[->] (v1)  to node[pos=0.5,below left] {$c_{12}(x) = 2$} (v2);
%\draw[->] (v2) to node[pos=0.5,below right] {$c_{24}(x) = 4x$} (v4);
%\draw[->] (v2) [bend left = 40] to node [pos=0.5,above,sloped] {$c_{23}(x) = 2$} (v3);
%\draw[->] (v3) [bend left = 40] to node[pos=0.5,above,sloped] {$c_{32}(x) = 1$} (v2);
%\end{tikzpicture}
%
%\caption{Cost functions.\label{fig:cost}}
%\end{figure}

%----------------------------------------------------------------------
%% Costs of network without non-negative tolls figure begins here

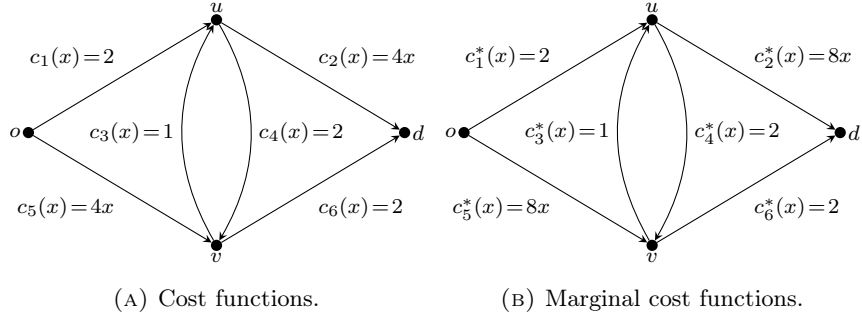
\begin{figure*}[tb]
\centering
\footnotesize

\begin{subfigure}[b]{0.45\textwidth}
\centering
\begin{tikzpicture}[scale=1.25,edgestyle/.style={-},
>=stealth]
\tikzstyle{every circle node}=[fill,inner sep=1.5pt,outer sep=0pt]
\coordinate (A) at (-2,0);
\coordinate (B) at (0,1.2);
\coordinate (C) at (0,-1.2);
\coordinate (D) at (2,0);

\draw (A) node[circle](A) {} node[left] {$\source$};
\draw (B) node[circle](B) {} node[above] {$u$};
\draw (C) node[circle](C) {} node[below] {$v$};
\draw (D) node[circle](D) {} node[right] {$\sink$};

\draw [edgestyle,->] (A) to node [midway, above left] {$\cost_{1}(\load) \!=\! 2$} (B);
\draw [edgestyle,->] (B) to [bend left = 30] node [midway, right] {$\cost_{4}(\load) \!=\! 2$} (C);
\draw [edgestyle,->] (C) to node [midway, below right] {$\cost_{6}(\load) \!=\! 2$} (D);
\draw [edgestyle,->] (A) to node [midway, below left] {$\cost_{5}(\load) \!=\! 4\load$} (C);
\draw [edgestyle,->] (B) to node [midway, above right] {$\cost_{2}(\load) \!=\! 4\load$} (D);
\draw [edgestyle,->] (C) to [bend left = 30] node [midway, left] {$\cost_{3}(\load) \!=\! 1$} (B);
\end{tikzpicture}
\subcaption{Cost functions.\label{fig:pigou}}
\end{subfigure}
\begin{subfigure}[b]{0.45\textwidth}
\centering
\begin{tikzpicture}[scale=1.25,edgestyle/.style={-},
>=stealth]
\tikzstyle{every circle node}=[fill,inner sep=1.5pt,outer sep=0pt]
\coordinate (A) at (-2,0);
\coordinate (B) at (0,1.2);
\coordinate (C) at (0,-1.2);
\coordinate (D) at (2,0);

\draw (A) node[circle](A) {} node[left] {$\source$};
\draw (B) node[circle](B) {} node[above] {$u$};
\draw (C) node[circle](C) {} node[below] {$v$};
\draw (D) node[circle](D) {} node[right] {$\sink$};

\draw [edgestyle,->] (A) to node [midway, above left] {$\cost^*_{1}(\load) \!=\! 2$} (B);
\draw [edgestyle,->] (B) to [bend left = 30] node [midway, right] {$\cost^*_{4}(\load) \!=\! 2$} (C);
\draw [edgestyle,->] (C) to node [midway, below right] {$\cost^*_{6}(\load) \!=\! 2$} (D);
\draw [edgestyle,->] (A) to node [midway, below left] {$\cost^*_{5}(\load) \!=\! 8\load$} (C);
\draw [edgestyle,->] (B) to node [midway, above right] {$\cost^*_{2}(\load) \!=\! 8\load$} (D);
\draw [edgestyle,->] (C) to [bend left = 30] node [midway, left] {$\cost^*_{3}(\load) \!=\! 1$} (B);
\end{tikzpicture}
\subcaption{Marginal cost functions.}
\end{subfigure}

\caption{Cost functions $\cost$ and marginal cost functions $\cost^*$ for a cyclic network.\label{fig:cost}}
\end{figure*}

%% Costs of network without non-negative tolls figure ends here
%----------------------------------------------------------------------

We proceed to describe the pattern of the socially optimum flow $\opt{\boldsymbol{\flow}}(\rate)$ as a function of the demand $\rate \geq 0$. To verify this, it may be helpful to consult the marginal cost functions $c^*(x) = (x c(x))'$ shown in \cref{fig:cost} (b) since the system optimum flow $\opt{\boldsymbol{\flow}}(\rate)$ is a Wardrop equilibrium for $c^*$

For demands $\rate \le 1/8$, only the path $(\edge_{\source,v},\edge_{v,u},\edge_{u,\sink})$ is used. When $1/8<\rate< 1/4$ the three paths $(\edge_{\source,v},\edge_{v,u},\edge_{u,\sink})$, $(\edge_{\source,u},\edge_{u,\sink})$, and $(\edge_{\source,v},\edge_{v,\sink})$ are used in the system optimum. Finally for $\rate > 1/4$ the paths $(\edge_{\source,u},\edge_{u,v},\edge_{v,\sink})$, $(\edge_{\source,u},\edge_{v,\sink})$, and $(\edge_{\source,v},\edge_{v,\sink})$ are used in the system optimum. Note that all paths four paths from $\source$ to $\sink)$ are used in the system optimum for some demand so that $\opt{\routes}^1 = \routes^1$.

Theorem~\ref{pr:ConverseOptimalToll} implies that for any \ac{DIOT} $\boldsymbol{\toll}$ we have $\sum_{\edge \in \route} \interc_\edge/2 + \toll_\edge = \sum_{\edge \in \routealt} \interc_\edge/2 + \toll_\edge$ for all all $\route, \routealt \in \opt{\routes}^i$. We derive the existence of a constant $T \in \R$ such that the following equations hold:
\begin{subequations}
\begin{align}
T &= \frac{1}{2} + \toll_5 + \toll_3 + \toll_2,\label{eq:i} \\
T &= 3 + \toll_1 + \toll_4 + \toll_6,\label{eq:ii}\\
T &= 1 + \toll_1 + \toll_2,\label{eq:iii}\\
T &= 1 + \toll_5 + \toll_6.\label{eq:iv}	
\end{align}
Subtracting both \eqref{eq:iii} and \eqref{eq:iv} from \eqref{eq:i} + \eqref{eq:i}, we obtain $0 = \frac{3}{2} + \toll_3 + \toll_4$ which implies that either $\toll_3$ or $\toll_4$ must be negative. 
\end{subequations}
%By \eqref{eq:TrivialOptimalToll} the trivial optimal toll $\hat{\boldsymbol{\toll}}$ is
%\begin{align*}
%\hat{\toll}_{\source,u} &= \hat{\toll}_{u,v} = \hat{\toll}_{v,\sink} =-1, &
%\hat{\toll}_{v,u} &= -\frac{1}{2}, &
%\hat{\toll}_{\source,v} &= \hat{\toll}_{u,\sink} = 0.
%\end{align*}
%
%
%By \cref{pr:ConverseOptimalToll}, any optimum toll $\toll$ satisfies \eqref{eq:TollOptEq} and we obtain the equations
%\begin{equation}\label{eq:TollsDemand1}
%\begin{split}
%&\toll_{\source,u}=-1+\chi_{\source,u},
%\quad
%\toll_{u,v}=-1+\chi_{u,v},
%\quad
%\toll_{v,\sink}=-1+\chi_{v,\sink},
%\\
%&
%\toll_{v,u}=-1/2+\chi_{v,u},
%\quad 
%\toll_{\source,v}=\chi_{\source,v},
%\quad
%\toll_{u,\sink}=\chi_{u,\sink}, 
%\end{split}
%\end{equation}
%with 
%\begin{equation*}
%\chi_{\source,v}+\chi_{v,\sink}
%=\chi_{\source,u}+\chi_{u,\sink}
%=\chi_{\source,v}+\chi_{v,u}+\chi_{u,\sink}
%=\chi_{\source,u}+\chi_{u,v}+\chi_{v,\sink},
%\end{equation*}
%which implies $\chi_{u,v}=-\chi_{v,u}$.
%For the tolls to be non-negative, we should have $\chi_{u,v}\ge 1$ and $\chi_{v,u}\ge 1/2$, which is clearly impossible.
%Hence no non-negative \ac{DIOT} exist.
\end{proof}

\section{Aggregatively non-negative Tolls}
\label{sec:general}

When nonnegative \acp{DIOT} do not exist, it is conceivable that a social planner may sometime want to use negative tolls in order to achieve her goal. 
Nevertheless, the planner may be subject to budget constraints and not be able to afford a toll system that implies a global loss. 
Therefore it is interesting to study the existence of conditions for a \ac{DIOT} $\boldsymbol{\toll}$ such that the following budget constraint is satisfied:
\begin{equation}\label{eq:Budget}
\sum_{\edge\in\edges}\toll_{\edge}\load_{\edge} \ge 0, \quad\text{for any feasible flow }\boldsymbol{\flow}.
\end{equation}
Intuitively, \eqref{eq:Budget} requires that the social planner does not loose money for any feasible flow.
In this section, we show that when the origin-destination pairs $(\source^i,\sink^i)$ satisfy a order condition, then a \ac{DIOT} satisfying the budget constraint exists.

\begin{theorem}\label{thm:OrderedMultipleOD}
Consider a game $\game = (\graph,\pairs,\boldsymbol{\cost})$ with BPR-type cost functions. 
If there exists an order $\prec$ on $\vertices$ such that for all $\pair\in\pairs$ we have $\source^{\pair}\prec\sink^{\pair}$, then there exists a \ac{DIOT} $\boldsymbol{\toll}$ that satisfies \eqref{eq:Budget}.
\end{theorem}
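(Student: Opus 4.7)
The plan is to construct a \ac{DIOT} of the form
\begin{equation*}
\toll_{\edge} \;=\; \phi(\vertexalt) - \phi(\vertex) \;-\; \tfrac{\degree}{\degree+1}\, \interc_{\edge},
\qquad \text{for every } \edge = (\vertex,\vertexalt)\in\edges,
\end{equation*}
where $\phi\colon \vertices \to \R$ is a vertex potential to be chosen below. This is the trivial \ac{DIOT} of \eqref{eq:TrivialOptimalToll} perturbed by a telescoping correction $\phi(\vertexalt) - \phi(\vertex)$ whose role is precisely to enforce \eqref{eq:Budget}.

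The \ac{DIOT} property is immediate from \cref{thm:toll-vector} and holds for \emph{any} choice of $\phi$: for each pair $\pair \in \pairs$ and each path $\route \in \routes^\pair$, the potential term telescopes along $\route$, giving
\begin{equation*}
\sum_{\edge \in \route} \Bigl(\toll_\edge + \tfrac{\degree}{\degree+1}\interc_\edge\Bigr) \;=\; \phi(\sink^\pair) - \phi(\source^\pair),
\end{equation*}
a constant that depends only on $\pair$. Condition \eqref{eq:TollOptEq} therefore holds with equality for all $\route \in \opt{\routes}^\pair$ and $\routealt \in \routes^\pair$, regardless of $\phi$.

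It remains to choose $\phi$ so that \eqref{eq:Budget} holds. Extend $\prec$ to a linear order $\vertex_1 \prec \vertex_2 \prec \dots \prec \vertex_n$ on $\vertices$ (possible by a standard topological extension), set $T_\pair := \max_{\route \in \routes^\pair}\sum_{\edge \in \route}\interc_\edge$ (finite since $\graph$ has finitely many edges and only simple paths are considered), and let $M := \tfrac{\degree}{\degree+1}\max_{\pair \in \pairs} T_\pair$. Define $\phi(\vertex_k) := k M$. Since $\source^\pair \prec \sink^\pair$ by hypothesis, $\phi(\sink^\pair) - \phi(\source^\pair) \geq M \geq \tfrac{\degree}{\degree+1} T_\pair$ for every $\pair$. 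For any feasible flow $\boldsymbol{\flow}\in\flows(\boldsymbol{\rate})$ with load profile $\boldsymbol{\load}$, flow conservation gives
\begin{equation*}
\sum_{\edge\in\edges}\!\bigl[\phi(\vertexalt)-\phi(\vertex)\bigr]\load_\edge \;=\; \sum_{\pair\in\pairs}\!\rate^{\pair}\bigl[\phi(\sink^\pair)-\phi(\source^\pair)\bigr],
\end{equation*}
while path decomposition yields $\sum_{\edge}\interc_\edge \load_\edge = \sum_{\route}\flow_\route \sum_{\edge\in\route}\interc_\edge \leq \sum_{\pair}\rate^{\pair} T_\pair$. Combining,
\begin{equation*}
\sum_{\edge\in\edges}\toll_\edge\load_\edge \;\geq\; \sum_{\pair\in\pairs}\rate^{\pair}\Bigl[\phi(\sink^\pair)-\phi(\source^\pair) - \tfrac{\degree}{\degree+1}T_\pair\Bigr] \;\geq\; 0,
\end{equation*}
which is \eqref{eq:Budget}.

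The main difficulty is ensuring that such a $\phi$ exists at all. A vertex that serves simultaneously as the origin of one commodity and the destination of another is subject to competing inequalities of the form $\phi(\sink^\pair) - \phi(\source^\pair) \geq \tfrac{\degree}{\degree+1} T_\pair$, and a cyclic chain of such dependencies among the \ac{OD} pairs would render the required gaps infeasible to satisfy everywhere at once. The hypothesis that an order $\prec$ exists with $\source^\pair \prec \sink^\pair$ for every commodity is precisely the structural condition that forbids such cycles, reducing the construction of $\phi$ to a topological sort with sufficiently large jumps.
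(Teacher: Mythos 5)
Your proof is correct and follows essentially the same route as the paper: the paper perturbs the trivial \ac{DIOT} by $\delta_\edge\increm$ with $\delta_\edge = j-i$ for $\edge\in(\vertex_{(i)}\to\vertex_{(j)})$ in a linear extension of $\prec$, which is exactly your telescoping potential $\phi(\vertex_k)=kM$ in disguise, and both arguments use that the correction sums to the same (positive) constant over every path of a commodity so that \eqref{eq:TollOptEq} is preserved. Your explicit choice $M=\tfrac{\degree}{\degree+1}\max_\pair T_\pair$ is a welcome sharpening of the paper's terser ``choose $\increm$ big enough,'' since it makes clear that a single constant works uniformly over all demand vectors.
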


We obtain the existence the budget feasible \acp{DIOT} for single commodity networks as a direct corollary of \ref{thm:OrderedMultipleOD}.

\begin{corollary}\label{co:BudgetConstraintSingleOD}
Consider a game $\game = (\graph,\pairs,\boldsymbol{\cost})$ with BPR-type cost and a single \ac{OD} pair. Then, there exists a \ac{DIOT} $\boldsymbol{\toll}$ that satisfies \eqref{eq:Budget}.
\end{corollary}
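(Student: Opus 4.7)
The plan is to invoke Theorem~\ref{thm:OrderedMultipleOD} with a trivial choice of order. Since $|\pairs| = 1$, let the unique origin-destination pair be $(\source, \sink)$. If $\source = \sink$ the statement is vacuous, because the zero flow is the unique feasible flow for every demand and the zero toll vector trivially satisfies \eqref{eq:Budget}. Otherwise, pick any total order $\prec$ on $\vertices$ with $\source \prec \sink$ — for instance, declare $\source$ to be the minimum element, $\sink$ the maximum element, and order the remaining vertices arbitrarily in between.

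With this order in place, the hypothesis of Theorem~\ref{thm:OrderedMultipleOD}, namely $\source^{\pair} \prec \sink^{\pair}$ for every $\pair \in \pairs$, reduces in the single-commodity case to the single condition $\source \prec \sink$, which holds by construction. Applying Theorem~\ref{thm:OrderedMultipleOD} then produces a \ac{DIOT} $\boldsymbol{\toll}$ for $\game$ that satisfies the budget constraint \eqref{eq:Budget}, which is exactly the assertion of the corollary.

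No step here poses any real obstacle: the entire content of the corollary is carried by Theorem~\ref{thm:OrderedMultipleOD}, and the single-commodity assumption makes the ordering hypothesis automatic regardless of the topology of $\graph$. The value of isolating this result as a corollary is expository — it singles out a broad and classical setting (arbitrary single-\ac{OD} instances with BPR-type costs) in which a budget-feasible demand-independent optimum toll is guaranteed to exist, even though non-negative tolls may fail to do so in the presence of directed cycles.
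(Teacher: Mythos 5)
Your proof is correct and matches the paper exactly: the corollary is obtained by choosing any total order on $\vertices$ with $\source \prec \sink$ (automatic when there is a single \ac{OD} pair) and invoking Theorem~\ref{thm:OrderedMultipleOD}. The brief treatment of the degenerate case $\source = \sink$ is a harmless addition the paper leaves implicit.
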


Example~\ref{ex:double-pigou} shows that the condition of \cref{thm:OrderedMultipleOD} is only sufficient for the existence of a \ac{DIOT} that satisfies \eqref{eq:Budget}.

%\section{Arbitrary cost functions}

\subparagraph*{Acknowledgements}
~\\
Riccardo Colini-Baldeschi and Marco Scarsini are members of GNAMPA-INdAM.
Max Klimm gratefully acknowledges the support and hospitality of LUISS during a visit in which this research was initiated.

%\newpage

%*************************************************************
%*****    BIBLIOGRAPHY
%*************************************************************
\bibliographystyle{apalike}
\bibliography{../bibtex/biblimitpoa}
%\bibliography{biblimitpoa}

\newpage
\appendix

\section{Missing Proofs of Section~\ref{sec:monomial}}

\subsection{Proof of Theorem~\ref{thm:characterization}}

\begin{proof}
Let $\alpha \in \R_+$ be arbitrary. All solutions of the differential equation
\begin{align}
\label{eq:diff_equation}
\frac{\load f''(\load)}{f'(\load)} = \alpha	
\end{align}
are of the form $f(\load) = a \frac{\load^{\alpha+1}}{\alpha+1} +b$ with $a,b \in \R$. Since $c$ is not of BPR-type, $c$ does not solve \eqref{eq:diff_equation}. In particular, there are $y,z \in \R_+$, with $0 < y < z$ such that
\begin{align}
\label{eq:xy}
\frac{y c''(y)}{c'(y)} \neq \frac{z c''(z)}{c'(z)}.
\end{align}
By continuity, we can choose $y > 0$ such that there is $\varepsilon > 0$ so that \eqref{eq:xy} hold for all $z \in (y,y+\varepsilon)$. 

Consider the game $\game = (\graph, \pairs, \boldsymbol{\cost})$ with $\vertices = \{\source,\sink\}$ and $e_1, e_2 \in (\source \to \sink)$ shown in \cref{fig:only_monomial}. The upper edge $e_1$ has cost function $c(x)$ while the lower edge has cost function $c(x) + \interc$ for some constant $\interc \in \R_{+}$ to be fixed later. Since the edges are parallel, without loss of generality, we can consider toll vectors $\boldsymbol{\toll}=(\toll,0)$, with  $\toll\geq 0$. 

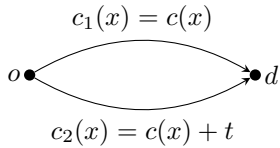
\begin{figure}[b]
\begin{tikzpicture}[edgestyle/.style={-},>=stealth]
\tikzstyle{every circle node}=[fill,inner sep=1.5pt,outer sep=0pt]
\draw (0,0) node(s)[circle]{} node[left]{$\source$};
\draw (3,0) node(t)[circle]{} node[right]{$\sink$};
\draw[->] (s) to[bend left]  node[pos=0.5,above] {$\cost_1(\load) = c(\load)$} (t);
\draw[->] (s) to[bend right] node[pos=0.5,below] {$\cost_2(\load) = c(\load)+\interc$} (t);
\end{tikzpicture}
\caption{\label{fig:only_monomial} Network consisting of two parallel edges.}
\end{figure}

For small flow demands $\rate$, both the system optimum and the equilibrium use the upper edge only. The system optimum starts using the lower link for the unique $\rate_0$ satisfying
\begin{align*}
c(\rate_0) + \rate_0 c'(\rate_0) = \cost(0) + \interc.	
\end{align*}
To see that $\rate_0$ is unique, consider the marginal cost function $\cost^* : \R_+ \to [c(0),\infty)$ defined as $\cost^*(\load) = \cost(\load) + \load \cost'(\load)$. As $c$ is strictly semi-convex, $xc(x)$ is convex and, thus, $\cost^*(\load) = (\load\cost(\load))'$ is strictly increasing. This implies that the inverse function $(\load^*)^{-1} : [\cost(0),\infty) \to \R_+$ is well-defined. We can then express $\rate_0$ as $\rate_0 = (\cost^*)^{-1}(\cost(0)+ \interc)$ so that $\rate_0$ is unique.
%In particular, we have $\rate_0(0) = (\cost^*)^{-1}(c(0)) = 0$ and $\lim_{\interc \to \infty} \rate_0(\interc) = \lim_{\interc \to \infty}(\cost^*)^{-1}(c(0)+\interc) = \infty$. By the continuity of $(\cost^*)^{-1}$, there is $\interc > 0$ such that $\rate_0(\interc) = y$.

For $\boldsymbol{\toll}$ to be a \ac{DIOT} it is necessary that also the equilibrium flow starts using the lower edge at demand $\rate_0$, i.e., it is necessary that
\begin{align*}
c(\rate_0) + \tau = c(0) + \interc. 
\end{align*}
As a consequence, for all demands $\rate \geq \rate_0$ both edges are used in equilibrium, and the equilibrium flow $(\load_1(\rate),\load_2(\rate))$ is governed by the system of differential equations
\begin{align*}
\load_1'(\rate) &= \frac{\frac{1}{c'(\load_1(\rate))}}{\frac{1}{c'(\load_1(\rate))} + \frac{1}{c'(\load_2(\rate))}}	&
\load_2'(\rate) &= \frac{\frac{1}{c'(\load_2(\rate))}}{\frac{1}{c'(\load_1(\rate))} + \frac{1}{c'(\load_2(\rate))}}
\end{align*}
with the initial values $x_1(\rate_0) = \rate_0$ and $x_2(\rate_0) = 0$, see \cite{HarKleKliMoe:Network2015} for a reference. Similarly, since the system optimum is at equilibrium with respect to the marginal costs $\cost^*$, its flow $(\opt{\load}_1(\rate), \opt{\load}_2(\rate))$ is governed by the system of differential equations
\begin{align*}
\opt{\load}_1'(\rate) &= \frac{\frac{1}{2c'(\opt{\load}_1(\rate))+\opt{\load}_1 c''(\opt{\load}_1(\rate)) }}{\frac{1}{2c'(\opt{\load}_1(\rate))+\opt{\load}_1(\rate) c''(\opt{\load}_1(\rate)) } + \frac{1}{2c'(\opt{\load}_2)+\opt{\load}_2(\rate) c''(\opt{\load}_2(\rate)) }}	\\
\opt{\load}_2'(\rate) &= \frac{\frac{1}{2c'(\opt{\load}_2(\rate))+\opt{\load}_2(\rate) c''(\opt{\load}_2(\rate)) }}{\frac{1}{2c'(\opt{\load}_1(\rate))+\opt{\load}_1 c''(\opt{\load}_1(\rate)) } + \frac{1}{2c'(\opt{\load}_2(\rate))+\opt{\load}_2 c''(\opt{\load}_2(\rate)) }}
\end{align*}
with the initial values $\opt{\load}_1(\rate_0) = \rate_0$ and $\opt{\load}_2(\rate_0) = 0$. For $\tau$ to be a DIOT it is necessary to have that $\load_2(\rate_0) = \opt{\load}_2(\rate_0)$ for all $\rate \geq \rate_0$.

We claim that we can choose $\interc > 0$ such that no \ac{DIOT} exists.
For a contradiction, let us assume that for all $\interc > 0$ we have $\load_2(\rate) = \opt{\load}_2(\rate)$ for all $\rate \geq \rate_0$. 
This implies in particular that $\load_2'(\rate_0) = \opt{\load}_2'(\rate_0)$ for all $\rate \geq \rate_0$. Since we assumed $\opt{\load}_1(\rate) = \load_1(\rate)$ and $\opt{\load}_2(\rate) = \load_2(\rate)$ for all $\rate \geq \rate_0$, this implies
\begin{align*}
\frac{\frac{1}{c'(\load_2(\rate))}}{\frac{1}{c'(\load_1(\rate))} + \frac{1}{c'(\load_2(\rate))}} &= \frac{\frac{1}{2c'(\load_2(\rate))+\load_2(\rate) c''(\load_2(\rate)) }}{\frac{1}{2c'(\load_1(\rate))+\load_1(\rate) c''(\load_1(\rate)) } + \frac{1}{2c'(\load_2(\rate))+\load_2(\rate) c''(\load_2(\rate))) }} & &\text{ for all $\rate \geq \rate_0$.}
\intertext{This is equivalent to}
\frac{1}{\frac{c'(\load_2(\rate))}{c'(\load_1(\rate))} + 1} &= \frac{1}{\frac{2c'(\load_2(\rate)) + \load_2(\rate)c''(\load_2(\rate))}{2c'(\load_1(\rate))+\load_1(\rate) c''(\load_1(\rate)) } + 1} & &\text{ for all $\rate \geq \rate_0$}
\intertext{implying}
\frac{c'(\load_2(\rate))}{c'(\load_1(\rate))} &= \frac{2c'(\load_2(\rate))+\load_2(\rate) c''(\load_2(\rate))}{2c'(\load_1(\rate)) + \load_1(\rate)c''(\load_1(\rate))} & &\text{ for all $\rate \geq \rate_0$.}
\end{align*}
We obtain
\begin{align}
\label{eq:xy_new}
\frac{x_1(\rate)c''(x_1(\rate))}{c'(x_1(\rate))} &= \frac{x_2(\rate)c''(x_2(\rate))}{c'(x_2(\rate))}  & &\text{ for all $\rate \geq \rate_0$.}
\end{align}
Since $\cost$ is strictly increasing, both $\load_1(\rate)$ and $\load_2(\rate)$ tend to infinity as $\rate$ goes to infinity. Since, in addition, $\load_2(\rate_0) = 0$ and $\load_2(\rate)$ is continuous, there is $\rate^*$ so that $\load_2(\rate^*) = y$. Such $\rate^*_t = \rate^*$, in fact, exists for all $\interc >0$. We claim that $\lim_{\interc \to 0} x_1(\rate^*_t) = y$. To see this, note that we have $c^*(x_1(\rate_t^*)) = c^*(x_2(\rate_t^*))$ since we assumed that the equilibrium flow $\load$ is optimal. This implies
\begin{align*}
\cost(\load_1(\rate_t^*)) + \load_1(\rate_t^*) \cost'(\load_1(\rate_t^*)) &= 	\cost(\load_2(\rate_t^*)) + \interc + \load_2(\rate_t^*) c'(\load_2(\rate_t^*))\\
&= c(y) + t + yc'(y),
\end{align*}
so that $x_1(\rate_t^*) = (\cost^*)^{-1}(c^*(y) + t)$. As $c^*$ is strictly increasing, $(c^*)^{-1}$ is continuous implying $\lim_{\interc \to 0} x_1(\rate_t^*) = y$. From there, we derive that for $\interc$ sufficiently small we have $x_1(\rate^*_t) \in (y,y+\varepsilon)$. We constructed a contradiction between \eqref{eq:xy} and \eqref{eq:xy_new}, which finishes the proof.
 \end{proof}

\subsection{Proof of Theorem~\ref{thm:OptimalNonnegativeToll}}

\begin{proof}
Given a \ac{DAMG} there exists a topological sort, namely a linear ordering $\prec$ of its vertices such that, if $\vertex\prec\vertexalt$, then there is no path from $\vertexalt$ to $\vertex$ in the \ac{DAMG}. Notice that, in general the topological sort of a \ac{DAMG} is not unique. Let $\abs{\vertices}=n$ and call $\boldsymbol{\vertex}^{\prec} = (\vertex_{(1)}, \dots, \vertex_{(n)})$ the vector of ordered vertices. For each edge $\edge\in(\vertex_{(i)}\to\vertex_{(j)})$, define 
\begin{equation}\label{eq:Difference}
\delta_{\edge}:=j-i.
\end{equation} 
Let $\hat{\boldsymbol{\toll}}$ be the trivial \ac{DIOT} of the game $\game$ and let
\begin{equation}\label{eq:xi}
\xi:=\min_{\edge\in\edges}\frac{\hat{\toll}_{\edge}}{\delta_{\edge}}
\quad\text{and}\quad
\chi:=\xi_{-},
\end{equation}
where $\xi_{-}=\max\{-\xi, 0\}$ is the negative part of $\xi$. 
Define now
\begin{equation}\label{eq:NonnegativeToll}
\toll_{\edge} = \hat{\toll}_{\edge} + \delta_{\edge}\chi.
\end{equation}
We first prove that the toll vector $\boldsymbol{\toll}$ is non-negative. 
Notice that $\chi$, defined as in \eqref{eq:xi} is non-negative and  $\chi=0$ only if $\hat{\toll}_{\edge} \ge 0$ for all $\edge\in\edges$.
Assume that there exists a $\hat{\toll}_{\edge}<0$ and let $\eq{\edge}\in\argmin_{\edge\in\edges}\hat{\toll}_{\edge}/\delta_{\edge}$. Then
\begin{equation*}
\toll_{\eq{\edge}}=\hat{\toll}_{\eq{\edge}}+\delta_{\eq{\edge}}\chi
=\hat{\toll}_{\eq{\edge}}-\delta_{\eq{\edge}}\frac{\hat{\toll_{\eq{\edge}}}}{\delta_{\eq{\edge}}}=0.
\end{equation*}
In general, whenever $\toll_{\edge} <0$, we have
\begin{equation*}
\toll_{\edge}=\hat{\toll}_{\edge}+\delta_{\edge}\chi
=\hat{\toll}_{\edge}-\delta_{\edge}\frac{\hat{\toll_{\eq{\edge}}}}{\delta_{\eq{\edge}}}
\ge \hat{\toll}_{\edge}-\delta_{\edge}\frac{\hat{\toll_{\edge}}}{\delta_{\edge}}
=0.
\end{equation*}

Now we prove that the toll vector $\boldsymbol{\toll}$ is a \ac{DIOT}. By \cref{thm:toll-vector}, this means that it satisfies equation \eqref{eq:TollOptEq}.
First, notice that, by construction of the $\delta_{\edge}$, for any $\pair\in\pairs$, we have
\begin{equation}\label{eq:sumdelta}
\sum_{\edge\in\route} \delta_{\edge} = \sum_{\edge\in\routealt} \delta_{\edge}\quad\text{for all } \route,\routealt \in \routes^{\pair}.
\end{equation}
By \eqref{eq:TollOptEq} we have
\begin{equation*}
\sum_{\edge\in\route}\Bigl( (\degree+1) \hat{\toll}_{\edge}+ \degree \interc_{\edge}\Bigr)=
\sum_{\edge\in\routealt}\Bigl( (\degree+1) \hat{\toll}_{\edge}+ \degree \interc_{\edge}\Bigr),
\end{equation*} 
hence, by \eqref{eq:sumdelta},
\begin{equation*}
\sum_{\edge\in\route}\Bigl( (\degree+1)(\hat{\toll}_{\edge} + \delta_{\edge}\chi)+ \degree\interc_{\edge}\Bigr)=
\sum_{\edge\in\routealt}\Bigl( (\degree+1)(\hat{\toll}_{\edge} + \delta_{\edge}\chi)+ \degree\interc_{\edge}\Bigr),
\end{equation*} 
that is
\begin{equation*}
\sum_{\edge\in\route} \left((\degree+1)\toll_{\edge} + \degree\interc_{\edge}\right)=
\sum_{\edge\in\routealt} \left((\degree+1)\toll_{\edge} + \degree\interc_{\edge}\right),
\end{equation*}
which finishes the proof.
\end{proof}

\section{Missing Proofs of Section~\ref{sec:general}}

\subsection{Proof of Theorem~\ref{thm:OrderedMultipleOD}}

\begin{proof}
Consider the trivial \ac{DIOT} $\hat{\boldsymbol{\toll}}$. 
Define now
\begin{equation}\label{eq:NonnegativeTollgamma}
\toll_{\edge}(\increm) = \hat{\toll}_{\edge} + \delta_{\edge}\increm,
\end{equation}
where $\delta$ is defined as in \eqref{eq:Difference}.
Notice that $\delta_{\edge}$ can be negative, but, for each path $\route\in\routes^{\pair}$ we have
\begin{equation*}
\sum_{\edge\in\route}\delta_{\edge}=\sink^{\pair}-\source^{\pair}=:\diffOD^{\pair}.
\end{equation*}
Hence
\begin{align*}
\sum_{\edge\in\edges}\toll_{\edge}(\increm)\load_{\edge} 
&=
\sum_{\edge\in\edges}\hat{\toll}_{\edge}\load_{\edge} + \increm\sum_{\edge\in\edges}\delta_{\edge}\load_{\edge}\\
&=
\sum_{\edge\in\edges}\hat{\toll}_{\edge}\load_{\edge} + \increm \sum_{\edge\in\edges}\delta_{\edge} \sum_{\route\ni\edge}\flow_{\route}\\
&=
\sum_{\edge\in\edges}\hat{\toll}_{\edge}\load_{\edge} + \increm 
\sum_{\pair\in\pairs}\sum_{\route\in\routes^{\pair}} \flow_{\route} \sum_{\edge\in\route} \delta_{\edge}\\
&=
\sum_{\edge\in\edges}\hat{\toll}_{\edge}\load_{\edge} + \increm 
\sum_{\pair\in\pairs}\sum_{\route\in\routes^{\pair}} \flow_{\route} \diffOD^{\pair}\\
&=
\sum_{\edge\in\edges}\hat{\toll}_{\edge}\load_{\edge} + \increm 
\sum_{\pair\in\pairs}\rate^{\pair} \diffOD^{\pair}.
\end{align*}
Since $\rate^{\pair}$ and $\diffOD^{\pair}$ are positive for all $\pair\in\pairs$, by choosing $\increm$ big enough, the quantity $\sum_{\edge\in\edges}\toll_{\edge}(\increm)\load_{\edge}$ can be made positive.
\end{proof}

\end{document}